\newtheorem{theorem}{Theorem}[section]
\newtheorem{proposition}[theorem]{Proposition}
\newtheorem{lemma}[theorem]{Lemma}
\newtheorem{corollary}[theorem]{Corollary}
\theoremstyle{remark}
\newcommand{\be}{\begin{equation}}
\newcommand{\ee}{\end{equation}}
\newcommand{\bea}{\begin{eqnarray}}
\newcommand{\eea}{\end{eqnarray}}
\numberwithin{equation}{section}
\begin{document}

\title{Painlev\'{e} Transcendents and the Hankel Determinants Generated by a Discontinuous Gaussian Weight}
\author{Chao Min\thanks{School of Mathematical Sciences, Huaqiao University, Quanzhou 362021, China; email: chaomin@hqu.edu.cn}\: and Yang Chen\thanks{Correspondence to: Yang Chen, Department of Mathematics, Faculty of Science and Technology, University of Macau, Macau, China;
email: yangbrookchen@yahoo.co.uk}}


\date{\today}
\maketitle
\begin{abstract}
This paper studies the Hankel determinants generated by a discontinuous Gaussian weight with one and two jumps. It is an extension of Chen and Pruessner \cite{Chen2005}, in which they studied the discontinuous Gaussian weight with a single jump. By using the ladder operator approach, we obtain a series of difference and differential equations to describe the Hankel determinant for the single jump case. These equations include the Chazy II equation, continuous and discrete Painlev\'{e} IV. In addition, we consider the large $n$ behavior of the corresponding orthogonal polynomials and prove that they satisfy the biconfluent Heun equation. We also consider the jump at the edge under a double scaling, from which a Painlev\'{e} XXXIV appeared. Furthermore, we study the Gaussian weight with two jumps, and show that a quantity related to the Hankel determinant satisfies a two variables' generalization of the Jimbo-Miwa-Okamoto $\sigma$ form of the Painlev\'{e} IV.
\end{abstract}

$\mathbf{Keywords}$: Hankel determinants; Random matrices; Orthogonal polynomials;

Ladder operators; Painlev\'{e} transcendents.

$\mathbf{Mathematics\:\: Subject\:\: Classification\:\: 2010}$: 15B52, 42C05, 33E17

\section{Introduction}
In the theory of random matrix ensembles, the joint probability density function for the eigenvalues $\{x_j\}_{j=1}^{n}$ of $n\times n$ Hermitian matrices from an unitary ensemble is given by \cite{Mehta}
\be\label{jpdf}
P(x_{1},x_{2},\ldots,x_{n})\prod_{j=1}^{n}dx_j=\frac{1}{D_{n}[w_{0}]}\frac{1}{n!}\prod_{1\leq j<k\leq n}\left(x_{k}-x_{j}\right)^{2}
\prod_{j=1}^{n}w_{0}(x_{j})dx_j,
\ee
where $w_{0}(x)$ is a weight function supported on $[a,b]$ and the moments of $w_{0}(x)$, namely,
$$
\mu_j:=\int_{a}^{b}x^j w_{0}(x)dx,\;\;j=0,1,2,\ldots
$$
exist.
Here $D_{n}[w_{0}]$ is the normalization constant
$$
D_{n}[w_{0}]=\frac{1}{n!}\int_{[a,b]^{n}}\prod_{1\leq j<k\leq n}\left(x_{k}-x_{j}\right)^{2}
\prod_{j=1}^{n}w_{0}(x_{j})dx_{j}.
$$


In this paper, we take
$$
w_{0}(x)=\mathrm{e}^{-x^{2}},\;\; x\in \mathbb{R},
$$
which corresponds to the Gaussian unitary ensemble. In this case, $D_{n}[w_{0}]$ has the closed-form expression \cite{Mehta},
$$
D_{n}[w_{0}]=(2\pi)^{\frac{n}{2}}2^{-\frac{n^2}{2}}G(n+1),
$$
where $G(z)$ is the Barnes G-function, defined by
$$
G(z+1)=\Gamma(z)G(z),\;\;G(1)=1
$$
and $\Gamma(z)$ is the gamma function, $\Gamma(z):=\int_{0}^{\infty}t^{z-1}\mathrm{e}^{-t}dt,\;\mathrm{Re}(z)>0$.
Noting that, \cite{Mehta} also gives the closed-form expression $D_{n}[w_{0}]$ for other weight functions, such as the Laguerre weight, $w_{0}(x)=x^{\alpha}\:{\rm e}^{-x},\;\alpha>-1,\;\;x\in \mathbb{R}^+$ and the Jacobi weight,
$w_{0}(x)=(1-x)^{\alpha}(1+x)^{\beta},\;\alpha>-1,\;\beta>-1,\;x\in [-1,1]$.

Linear statistics is a ubiquitous statistical characteristic in random matrix theory \cite{Basor1997,Chen1998,Chen1994,Johansson,Min201601,Min201602}. A linear statistic is a linear sum of a certain function $g$ of the random variable $x_{j}$: $\sum_{j=1}^{n}g(x_j)$. In our paper, $\{x_{j},\;j=1,2,\ldots,n\}$ are the eigenvalues of Hermitian matrices. Whenever one encounters a linear statistic, it is useful to consider its ``exponential'' generating function, which is defined by the expectation of ${\rm e}^{\lambda\sum_{j=1}^{n}g(x_j)}$ over the joint probability distribution (\ref{jpdf}), where $\lambda$ is a parameter, that is,
$$
\mathbb{E}\left({\rm e}^{\lambda\:\sum_{j=1}^{n}g(x_j)}\right)
:=\frac{1}{D_{n}[w_{0}]}\frac{1}{n!}\int_{(-\infty,\infty)^{n}}\prod_{1\leq j<k\leq n}\left(x_{k}-x_{j}\right)^{2}\prod_{j=1}^{n}w_{0}(x_{j})\:{\rm e}^{\lambda\:g(x_j)}dx_{j}.
$$

More generally, we can consider
\be\label{exp}
\mathbb{E}\left(\prod_{j=1}^{n}f(x_j)\right)
:=\frac{1}{D_{n}[w_{0}]}\frac{1}{n!}\int_{(-\infty,\infty)^{n}}\prod_{1\leq j<k\leq n}\left(x_{k}-x_{j}\right)^{2}\prod_{j=1}^{n}w_{0}(x_{j})f(x_{j})dx_{j}.
\ee
In this paper, we investigate the case
\be\label{fx}
f(x)=A+B_{1}\theta(x-t_{1})+B_{2}\theta(x-t_{2}),\;\;t_{1}<t_{2},
\ee
where $\theta(x)$ is the Heaviside step function, i.e., $\theta(x)$ is 1 for $x>0$ and 0 otherwise, and $A, B_{1}, B_{2}$ are constants, $A\geq0,\;A+B_{1}\geq0,\;A+B_{1}+B_{2}\geq0$.
\\
Let
\be\label{w12}
w(x,t_{1},t_{2}):=w_{0}(x)(A+B_{1}\theta(x-t_{1})+B_{2}\theta(x-t_{2}))
\ee
and
$$
D_{n}[w]:=\frac{1}{n!}\int_{(-\infty,\infty)^{n}}\prod_{1\leq j<k\leq n}\left(x_{k}-x_{j}\right)^{2}
\prod_{j=1}^{n}w(x_{j},t_{1},t_{2})dx_{j}.
$$
We will see that $D_{n}[w]$ is the Hankel determinant generated by the weight $w(x,t_{1},t_{2})$.

The motivation of this paper comes from Chen and Pruessner \cite{Chen2005}, in which they studied the Hankel determinant generated by the Gaussian weight with a single jump. In addition, Basor and Chen \cite{Basor2009}, Chen and Zhang \cite{Chen2010} investigated the Hankel determinants for the Laguerre weight and Jacobi weight with a single jump, respectively. We would like to consider the discontinuous Gaussian weight (\ref{w12}). The single jump case is recovered if $B_{2}=0$, and we also study this case since we have some important results beyond \cite{Chen2005}. So this paper is divided into two parts, the first part is the single jump case and the other is the general case with two jumps.

We would like to point out that, there are three important special cases from (\ref{exp}) and (\ref{fx}): the first one is $A=0,\;B_{1}=1,\;B_{2}=0$, and this leads us to compute the probability that the smallest eigenvalue is greater than $t_{1}$; the second one is $A=1,\;B_{1}=-1,\;B_{2}=0$, and this allows us to compute the probability that the largest eigenvalue is less than $t_{1}$; the last one is $A=0,\;B_{1}=1,\;B_{2}=-1$ and is related to the probability of the eigenvalues lying in the interval $(t_{1},t_{2})$, which has been studied by Basor, Chen and Zhang \cite{Basor2012}. Moreover, Tracy and Widom \cite{Tracy} used the Fredholm theory of integral equations to study the probability distribution of the largest and smallest eigenvalue in the Gaussian unitary ensemble. They expressed it as a Fredholm determinant $\det(I-K)$, and proved that the logarithmic derivative of $\det(I-K)$ satisfies a third order differential equation. This equation can be integrated to the $\sigma$ form of a Painlev\'{e} IV if the integration constant is zero (see (5.14) in \cite{Tracy}). They also studied the largest and smallest eigenvalue distribution in the Laguerre and Jacobi unitary ensemble. Similarly, the Laguerre case is related to a third order differential equation and can be integrated to the $\sigma$ form of a Painlev\'{e} V. But for the Jacobi case, they only obtained a third order differential equation. Later, Haine and Semengue \cite{Haine} used the Virasoro approach to obtain another third order differential equation and showed that the difference of them can be reduced to the $\sigma$ form of a Painlev\'{e} VI.

The approach in this paper is based on the ladder operators of orthogonal polynomials and the associated compatibility conditions ($S_{1}$), ($S_{2}$) and ($S_{2}'$). An elementary method to compute $D_{n}[w]$ is to write it as a Hankel determinant. It is a well-known fact that Hankel determinants can be expressed as the product of the square of the $L^{2}$ norms of orthogonal polynomials. Based on the ladder operators adapted to these orthogonal polynomials, and from the associated supplementary conditions, a series of difference and differential equations can be derived to ultimately give a description of $D_{n}[w]$.

This paper is organized as follows: In Sec. 2, we consider the Gaussian weight with a single jump. From the ladder operators and supplementary conditions on the weight $w(x,t_{1})$, we obtain in addition to what was found in Chen and Pruessner \cite{Chen2005}, and prove that the auxiliary quantities $r_{n}(t_{1})$, $R_{n}(t_{1})$ and $\sigma_{n}(t_{1})$ satisfy the second order difference and differential equations respectively. Moreover, we consider the large $n$ behavior of the monic orthogonal polynomials $P_{n}(z,t_{1})$ and the double scaling of the auxiliary quantities. In Sec. 3, we consider the general case with two jumps. We also use the ladder operator approach to obtain a partial differential equation on $\sigma_{n}(t_{1},t_{2})$, which is a two variables' generalization of the $\sigma$ form of the Painlev\'{e} IV. The conclusion is given in Sec. 4.

\section{Gaussian Weight with a Single Jump}
We set $B_{2}=0$ and $B_{1}\neq 0$, which corresponds to the Gaussian weight with a single jump
\be\label{weight}
w(x,t_{1}):=\mathrm{e}^{-x^2}(A+B_{1}\theta(x-t_{1})).
\ee
This case has been studied by Chen and Pruessner \cite{Chen2005} with parameters $A=1-\frac{\beta}{2},\;B_{1}=\beta$. They obtained a Painlev\'{e} IV for the diagonal recurrence coefficient $\alpha_{n}(t_{1})$ of the monic orthogonal polynomials with respect to (\ref{weight}), which is actually the same with our result (\ref{p4}). Furthermore, Its \cite{Its} studied the weight
$$
w(x,t_{1})=\mathrm{e}^{-x^2}\cdot\left\{
\begin{aligned}
&\mathrm{e}^{i\beta \pi},&x<t_{1};\\
&\mathrm{e}^{-i\beta \pi},&x>t_{1},
\end{aligned}
\right.
$$
which corresponds to $A=\mathrm{e}^{i\beta \pi},\; B_{1}=\mathrm{e}^{-i\beta \pi}-\mathrm{e}^{i\beta \pi}$ in our problem, and they also obtained a Painlev\'{e} IV for $\alpha_{n}(t_{1})$ by using the Riemann-Hilbert approach. See also \cite{Bogatskiy, Its2008} for more detailed discussion on this weight. In addition, Xu and Zhao \cite{Xu} considered the Gaussian weight with a jump at the edge.

\subsection{Ladder Operators and Supplementary Conditions}
Let $P_{n}(x,t_{1})$ be the monic polynomials of degree $n$ orthogonal with respect to the weight function $w(x,t_{1})$, that is,
\be\label{or}
\int_{-\infty}^{\infty}P_{j}(x,t_{1})P_{k}(x,t_{1})w(x,t_{1})dx=h_{j}(t_{1})\delta_{jk},\;\;j,k=0,1,2,\ldots,
\ee
where
$$
w(x,t_{1})=w_{0}(x)(A+B_{1}\theta(x-t_{1})),\;\;w_{0}(x):=\mathrm{e}^{-\mathrm{v}_{0}(x)},\;\;\mathrm{v}_{0}(x)=x^2.
$$
The monic polynomials $P_{n}(x,t_{1})$ have the monomial expansion
\be\label{expan}
P_{n}(x,t_{1})=x^{n}+\mathrm{p}(n,t_{1})x^{n-1}+\cdots+P_{n}(0,t_{1}),
\ee
and we see later that $\mathrm{p}(n,t_{1})$, the coefficient of $x^{n-1}$, will play a significant role.

From the orthogonality condition (\ref{or}), we have the recurrence relation \cite{Szego}
\be\label{rr}
xP_{n}(x,t_{1})=P_{n+1}(x,t_{1})+\alpha_{n}(t_{1})P_{n}(x,t_{1})+\beta_{n}(t_{1})P_{n-1}(x,t_{1})
\ee
with the initial conditions
$$
P_{0}(x,t_{1})=1,\;\;\beta_{0}(t_{1})P_{-1}(x,t_{1})=0.
$$
An easy consequence of (\ref{rr}) and (\ref{expan}) gives
\be\label{alp}
\alpha_{n}(t_{1})=\mathrm{p}(n,t_{1})-\mathrm{p}(n+1,t_{1})
\ee
and a telescopic sum followed by
\be\label{sum}
\sum_{j=0}^{n-1}\alpha_{j}(t_{1})=-\mathrm{p}(n,t_{1}).
\ee
Moreover, it follows from (\ref{rr}) and (\ref{or}) that
\be\label{be}
\beta_{n}(t_{1})=\frac{h_{n}(t_{1})}{h_{n-1}(t_{1})}.
\ee
In this section, we also denote
$$
D_{n}(t_{1}):=D_{n}[w]=\frac{1}{n!}\int_{(-\infty,\infty)^{n}}\prod_{1\leq j<k\leq n}\left(x_{k}-x_{j}\right)^{2}
\prod_{j=1}^{n}w(x_{j},t_{1})dx_{j}.
$$
It is a well-known fact that $D_{n}(t_{1})$ can be expressed as a Hankel determinant generated by the weight $w(x,t_{1})$ \cite{Szego}
\bea\label{dnw}
D_{n}(t_{1})&=&\det\left(\int_{-\infty}^{\infty}x^{i+j}w(x,t_{1})dx\right)_{i,j=0}^{n-1}\nonumber\\
&=&\prod_{j=0}^{n-1}h_{j}(t_{1}).
\eea
For convenience, we would not show the $t_{1}$ dependence in $P_{n}(x)$, $h_{n}$, $\alpha_{n}$ and $\beta_{n}$ unless it is needed in the following discussions. From Lemma 1 in \cite{Basor2009} and noting that $\frac{\mathrm{v}_{0}'(z)-\mathrm{v}_{0}'(y)}{z-y}=2$ in our problem, we have the following theorem.
\begin{theorem}
The monic orthogonal polynomials with respect to the weight $w(x,t_{1})$ satisfy the following differential recurrence relations:
\be\label{lowering}
P_{n}'(z)=\beta_{n}A_{n}(z)P_{n-1}(z)-B_{n}(z)P_{n}(z),
\ee
\be\label{raising}
P_{n-1}'(z)=(B_{n}(z)+\mathrm{v}_{0}'(z))P_{n-1}(z)-A_{n-1}(z)P_{n}(z),
\ee
where
\be\label{anz}
A_{n}(z)=2+\frac{R_{n}(t_{1})}{z-t_{1}},
\ee
\be\label{bnz}
B_{n}(z)=\frac{r_{n}(t_{1})}{z-t_{1}},
\ee
and
$$
R_{n}(t_{1})=\frac{B_{1}\: P_{n}^{2}(t_{1},t_{1})\mathrm{e}^{-t_{1}^{2}}}{h_{n}(t_{1})},\;\;\;\;\;\;r_{n}(t_{1})=\frac{B_{1}\: P_{n}(t_{1},t_{1})P_{n-1}(t_{1},t_{1})\mathrm{e}^{-t_{1}^{2}}}{h_{n-1}(t_{1})}.
$$
Here $P_{n}(t_{1},t_{1}):=P_{n}(x,t_{1})|_{x=t_{1}}$.
\end{theorem}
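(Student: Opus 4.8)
The plan is to prove this exactly as in the ladder--operator lemma of \cite{Basor2009} (their Lemma~1), specialized to the Gaussian potential $\mathrm{v}_0(x)=x^2$; the only point requiring care is the discontinuity of $w(x,t_1)$ at $x=t_1$.

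First I would establish the lowering relation (\ref{lowering}). Since $P_n'(z)$ has degree $n-1$, expand it in the orthogonal basis as $P_n'(z)=\sum_{k=0}^{n-1}c_{n,k}P_k(z)$, so that by (\ref{or}) one has $c_{n,k}h_k=\int_{-\infty}^{\infty}P_n'(y)P_k(y)w(y,t_1)\,dy$. To evaluate the coefficients I would integrate by parts after splitting the integral as $\int_{-\infty}^{t_1}+\int_{t_1}^{\infty}$: on each piece $w$ is a constant multiple of $w_0$ so $w'=-\mathrm{v}_0'\,w$, the endpoint terms at $\pm\infty$ vanish by Gaussian decay, and the two interior endpoint terms at $t_1$ do \emph{not} cancel, combining instead into $-B_1 P_n(t_1)P_k(t_1)\mathrm{e}^{-t_1^2}$ on account of the jump. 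This gives
\[
c_{n,k}h_k=-\!\int_{-\infty}^{\infty}P_n P_k'\,w\,dy+\int_{-\infty}^{\infty}\mathrm{v}_0'(y)P_n P_k\,w\,dy-B_1 P_n(t_1)P_k(t_1)\mathrm{e}^{-t_1^2}.
\]
For $k\le n-1$ the first integral vanishes since $\deg P_k'<n$; in the second, the Gaussian weight is essential, for $\mathrm{v}_0'(y)=2y$ and hence by the recurrence (\ref{rr}) one obtains $\int\mathrm{v}_0' P_n P_k\,w=2h_n\delta_{k,n-1}$. Substituting back and collapsing $\sum_{k=0}^{n-1}P_k(t_1)P_k(z)/h_k$ by the Christoffel--Darboux formula yields
\[
P_n'(z)=2\beta_n P_{n-1}(z)-\frac{B_1 P_n(t_1)\mathrm{e}^{-t_1^2}}{h_{n-1}}\cdot\frac{P_n(z)P_{n-1}(t_1)-P_n(t_1)P_{n-1}(z)}{z-t_1}.
\]
Reading off the coefficients of $P_n(z)$ and $P_{n-1}(z)$ (and using $\beta_n=h_n/h_{n-1}$) gives (\ref{lowering}) with $A_n$, $B_n$ as in (\ref{anz})--(\ref{bnz}) and identifies $R_n(t_1)$, $r_n(t_1)$ with the stated expressions.

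For the raising relation (\ref{raising}), expanding $P_{n-1}'$ directly is not the efficient route, since it only produces the $P_{n-2},\ldots$ components. Instead I would first verify the compatibility identity $(S_1)$ in the form $(z-\alpha_{n-1})A_{n-1}(z)-B_{n-1}(z)=B_n(z)+\mathrm{v}_0'(z)$; with the explicit $A_n$, $B_n$ already obtained this is a short algebraic check --- matching the polynomial part and the $1/(z-t_1)$ part separately and using (\ref{rr}) --- which, incidentally, forces $R_n=2\alpha_n$. Then, applying the lowering relation already proved at level $n-1$, eliminating $\beta_{n-1}P_{n-2}(z)$ via (\ref{rr}) written as $\beta_{n-1}P_{n-2}(z)=(z-\alpha_{n-1})P_{n-1}(z)-P_n(z)$, and simplifying the coefficient of $P_{n-1}(z)$ with $(S_1)$, one arrives at (\ref{raising}).

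All of the manipulations above --- the integrations by parts, the recurrence evaluations, the Christoffel--Darboux resummation, the algebra for $(S_1)$ --- are routine. The only genuine obstacle is the bookkeeping around the discontinuity: one must check that splitting the integral at $t_1$ produces precisely the term $-B_1 P_n(t_1)P_k(t_1)\mathrm{e}^{-t_1^2}$, and that after the Christoffel--Darboux step this is exactly what generates the simple poles of $A_n(z)$ and $B_n(z)$ at $z=t_1$, with residues giving $R_n(t_1)$ and $r_n(t_1)$.
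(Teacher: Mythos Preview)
Your derivation of the lowering relation is correct and is exactly the specialization of Lemma~1 in \cite{Basor2009} that the paper invokes: the integration by parts with the splitting at $t_1$, the boundary term $-B_1 P_n(t_1)P_k(t_1)\mathrm{e}^{-t_1^2}$, and the Christoffel--Darboux collapse are all right.

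For the raising relation there is one small imprecision. Your route (lowering at level $n-1$, eliminate $P_{n-2}$ via (\ref{rr}), then identify the coefficient of $P_{n-1}$ with $B_n+\mathrm{v}_0'$ by $(S_1)$) is fine, but the claim that $(S_1)$ is ``a short algebraic check \ldots\ using (\ref{rr})'' is only half true. The residue part $r_n+r_{n-1}=(t_1-\alpha_{n-1})R_{n-1}$ indeed drops out of the explicit formulas for $R_n,r_n$ together with (\ref{rr}) evaluated at $x=t_1$. The polynomial part, however, is the identity $R_{n-1}=2\alpha_{n-1}$, and this does \emph{not} follow from (\ref{rr}); you need one more integration by parts of the same kind you already did, namely
\[
2\alpha_{m}h_{m}=2\!\int x\,P_{m}^{2}\,w\,dx
= B_1 P_{m}^{2}(t_1)\mathrm{e}^{-t_1^2}+2\!\int P_{m}P_{m}'\,w\,dx
= B_1 P_{m}^{2}(t_1)\mathrm{e}^{-t_1^2},
\]
since $\deg P_m'<m$. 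With that in hand your argument is complete.

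Note also a slight organizational difference: in the paper (and in \cite{Basor2009,Chen2005}) both the lowering and raising relations are established first, and $(S_1)$, $(S_2)$ are then \emph{derived} as compatibility conditions (Lemma~\ref{s1s2}). Your route reverses this for the raising relation, using $(S_1)$ as an input; this is harmless once the missing identity above is supplied.
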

The following lemmas can be found in \cite{Basor2009, Chen2005, Chen2010}. See also \cite{Chen1997,Dai,Magnus,Min2017} for more information.
\begin{lemma}\label{s1s2}
The functions $A_{n}(z)$ and $B_{n}(z)$ satisfy the conditions:
\be
B_{n+1}(z)+B_{n}(z)=(z-\alpha_{n}) A_{n}(z)-\mathrm{v}_{0}'(z), \tag{$S_{1}$}
\ee
\be
1+(z-\alpha_{n})(B_{n+1}(z)-B_{n}(z))=\beta_{n+1}A_{n+1}(z)-\beta_{n}A_{n-1}(z). \tag{$S_{2}$}
\ee
\end{lemma}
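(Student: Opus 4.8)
The plan is to obtain both compatibility conditions as purely algebraic consequences of the ladder relations (\ref{lowering}), (\ref{raising}) and the three-term recurrence (\ref{rr}), so that the integral formulas for $A_{n}(z)$, $B_{n}(z)$ are not needed again. For $(S_{1})$, I would write $P_{n}'(z)$ in two ways: the lowering relation (\ref{lowering}) gives $P_{n}'(z)=\beta_{n}A_{n}(z)P_{n-1}(z)-B_{n}(z)P_{n}(z)$, while the raising relation (\ref{raising}) with $n$ replaced by $n+1$ gives $P_{n}'(z)=(B_{n+1}(z)+\mathrm{v}_{0}'(z))P_{n}(z)-A_{n}(z)P_{n+1}(z)$. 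Equating the two and collecting the $A_{n}(z)$ terms yields
\[
A_{n}(z)\bigl(P_{n+1}(z)+\beta_{n}P_{n-1}(z)\bigr)=\bigl(B_{n+1}(z)+B_{n}(z)+\mathrm{v}_{0}'(z)\bigr)P_{n}(z).
\]
By the recurrence (\ref{rr}), $P_{n+1}(z)+\beta_{n}P_{n-1}(z)=(z-\alpha_{n})P_{n}(z)$; substituting this and cancelling the common factor $P_{n}(z)$ gives $(S_{1})$. The cancellation is legitimate because the identity is between rational functions of $z$ and $P_{n}$ is monic, hence not identically zero.

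For $(S_{2})$, I would differentiate the recurrence (\ref{rr}), written as $P_{n+1}(z)=(z-\alpha_{n})P_{n}(z)-\beta_{n}P_{n-1}(z)$, to get $P_{n+1}'(z)=P_{n}(z)+(z-\alpha_{n})P_{n}'(z)-\beta_{n}P_{n-1}'(z)$, and then eliminate $P_{n}'$ and $P_{n-1}'$ using (\ref{lowering}) and (\ref{raising}). Using $(S_{1})$ in the form $(z-\alpha_{n})A_{n}(z)-B_{n}(z)-\mathrm{v}_{0}'(z)=B_{n+1}(z)$ to tidy the coefficient of $P_{n-1}(z)$, this gives
\[
P_{n+1}'(z)=\bigl[1-(z-\alpha_{n})B_{n}(z)+\beta_{n}A_{n-1}(z)\bigr]P_{n}(z)+\beta_{n}B_{n+1}(z)P_{n-1}(z).
\]
Independently, the lowering relation (\ref{lowering}) with $n$ replaced by $n+1$, after using (\ref{rr}) once more to rewrite $P_{n+1}(z)$, gives
\[
P_{n+1}'(z)=\bigl[\beta_{n+1}A_{n+1}(z)-(z-\alpha_{n})B_{n+1}(z)\bigr]P_{n}(z)+\beta_{n}B_{n+1}(z)P_{n-1}(z).
\]
The coefficients of $P_{n-1}(z)$ agree identically, so subtracting the two expressions leaves $c(z)P_{n}(z)\equiv 0$ with $c(z)$ rational in $z$; hence $c(z)\equiv 0$, and rearranging $c(z)=0$ is precisely $(S_{2})$.

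The computation is elementary once the ladder relations of Theorem 2.1 are available, so I do not expect a genuine obstacle; the points needing care are the bookkeeping in the coefficient comparison for $(S_{2})$ and the repeated appeal to the fact that the coefficient identities hold as identities of rational functions of $z$, so that common polynomial factors like $P_{n}(z)$ may be cancelled rather than merely pointwise. If one prefers a derivation of $(S_{1})$ that does not invoke the raising relation, one can instead argue directly from the integral representations of $A_{n}$ and $B_{n}$: combine $B_{n+1}(z)+B_{n}(z)$ via (\ref{rr}) into a single integral, decompose $y-\alpha_{n}=(y-z)+(z-\alpha_{n})$ against the kernel $\tfrac{\mathrm{v}_{0}'(z)-\mathrm{v}_{0}'(y)}{z-y}$, and remove the residual integral using integration by parts and orthogonality, the jump of the weight at $t_{1}$ producing exactly the $R_{n}/(z-t_{1})$ and $r_{n}/(z-t_{1})$ contributions; this is the route of \cite{Basor2009}.
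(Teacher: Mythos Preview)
Your argument is correct and is exactly the standard derivation: the paper does not supply its own proof of this lemma but simply cites \cite{Basor2009, Chen2005, Chen2010}, and what you have written is precisely the algebraic proof one finds there (compare the two expressions for $P_n'$ to get $(S_1)$, then differentiate the recurrence and compare with the shifted lowering relation to get $(S_2)$). Your closing remark about the alternative integral-representation route for $(S_1)$ is also accurate and matches the approach in \cite{Basor2009}.
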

The combination of ($S_{1}$) and ($S_{2}$) produces the following identity which gives better insight into the $\beta_{n}$ term.
\begin{lemma}\label{s2p}
$A_{n}(z)$, $B_{n}(z)$ and $\sum_{j=0}^{n-1}A_{j}(z)$ satisfy the identity
\be
B_{n}^{2}(z)+\mathrm{v}_{0}'(z)B_{n}(z)+\sum_{j=0}^{n-1}A_{j}(z)=\beta_{n}A_{n}(z)A_{n-1}(z). \tag{$S_{2}'$}
\ee
\end{lemma}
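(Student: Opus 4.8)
The plan is to obtain ($S_2'$) by a purely algebraic manipulation of the two compatibility conditions ($S_1$) and ($S_2$) from Lemma \ref{s1s2}; the point is that the left-hand side of ($S_2'$) is built precisely so as to telescope in $n$.

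First I would multiply both sides of ($S_2$) by $A_n(z)$, obtaining
$$A_n(z)+(z-\alpha_n)A_n(z)\bigl(B_{n+1}(z)-B_n(z)\bigr)=\beta_{n+1}A_{n+1}(z)A_n(z)-\beta_n A_n(z)A_{n-1}(z).$$
Next I would invoke ($S_1$) in the form $(z-\alpha_n)A_n(z)=B_{n+1}(z)+B_n(z)+\mathrm{v}_0'(z)$ to eliminate the factor $(z-\alpha_n)A_n(z)$ on the left. Since $\bigl(B_{n+1}+B_n+\mathrm{v}_0'\bigr)\bigl(B_{n+1}-B_n\bigr)=B_{n+1}^2-B_n^2+\mathrm{v}_0'(B_{n+1}-B_n)$, rearranging turns the identity into
$$\Bigl(B_{n+1}^2+\mathrm{v}_0'B_{n+1}-\beta_{n+1}A_{n+1}A_n\Bigr)-\Bigl(B_n^2+\mathrm{v}_0'B_n-\beta_nA_nA_{n-1}\Bigr)=-A_n.$$
Setting $U_n:=B_n^2(z)+\mathrm{v}_0'(z)B_n(z)-\beta_nA_n(z)A_{n-1}(z)$, this reads $U_{n+1}-U_n=-A_n(z)$, and summing over the indices $0,1,\dots,n-1$ gives $U_n=U_0-\sum_{j=0}^{n-1}A_j(z)$.

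It then remains only to evaluate the boundary term $U_0$. From the initial data $\beta_0=0$ and $B_0(z)\equiv 0$ (the latter because $r_0=0$, a consequence of the convention $P_{-1}=0$ in the recurrence (\ref{rr}) together with the definition of $B_n(z)$), one obtains $U_0=0$, hence $B_n^2+\mathrm{v}_0'B_n-\beta_nA_nA_{n-1}=-\sum_{j=0}^{n-1}A_j$, which is exactly ($S_2'$). The computation is elementary throughout; the only place needing a little care is the bookkeeping of the telescoping sum and the vanishing of the boundary term $U_0$. One could instead check ($S_2'$) directly at a small value of $n$ and propagate it using the recursion $U_{n+1}=U_n-A_n$, but the argument via $U_0=0$ is cleanest.
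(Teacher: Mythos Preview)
Your proof is correct and follows precisely the standard argument that the paper alludes to when it writes ``The combination of ($S_1$) and ($S_2$) produces the following identity'' before stating Lemma \ref{s2p}; the paper itself does not give the details but cites \cite{Basor2009, Chen2005, Chen2010}, where the derivation is exactly the one you outline---multiply ($S_2$) by $A_n$, substitute ($S_1$), telescope, and use $B_0=0$, $\beta_0=0$ for the boundary term.
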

\noindent $\mathbf{Remark.}$ The three identities ($S_{1}$), ($S_{2}$) and ($S_{2}'$) are valid for $z\in \mathbb{C}\cup\{\infty\}$.
\begin{lemma}\label{de}
$P_{n}(z)$ satisfy the following second order differential equation:
\be\label{sode}
P_{n}''(z)-\left(\mathrm{v}_{0}'(z)+\frac{A_{n}'(z)}{A_{n}(z)}\right)P_{n}'(z)+\left(B_{n}'(z)-B_{n}(z)\frac{A_{n}'(z)}{A_{n}(z)}
+\sum_{j=0}^{n-1}A_{j}(z)\right)P_{n}(z)=0.
\ee
\end{lemma}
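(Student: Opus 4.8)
The plan is to convert the pair of first-order ladder relations (\ref{lowering})--(\ref{raising}) into a single second-order equation for $P_{n}(z)$ alone, by differentiating the lowering relation once and then systematically eliminating every remaining occurrence of $P_{n-1}(z)$ and $P_{n-1}'(z)$.

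First I would differentiate (\ref{lowering}) with respect to $z$, keeping in mind that $\beta_{n}$ is independent of $z$; this writes $P_{n}''(z)$ as a linear combination of $P_{n-1}(z)$, $P_{n-1}'(z)$, $P_{n}(z)$ and $P_{n}'(z)$ with coefficients $\beta_{n}A_{n}'(z)$, $\beta_{n}A_{n}(z)$, $-B_{n}'(z)$ and $-B_{n}(z)$. Next I would invoke the raising relation (\ref{raising}) to substitute $P_{n-1}'(z)=(B_{n}(z)+\mathrm{v}_{0}'(z))P_{n-1}(z)-A_{n-1}(z)P_{n}(z)$, so that $P_{n-1}(z)$ becomes the only index-$(n-1)$ object left; solving (\ref{lowering}) for it, namely $\beta_{n}P_{n-1}(z)=\big(P_{n}'(z)+B_{n}(z)P_{n}(z)\big)/A_{n}(z)$, and inserting this leaves an identity in $P_{n}(z)$ and $P_{n}'(z)$ only.

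Collecting the coefficient of $P_{n}'(z)$ at that point, the two terms proportional to $B_{n}(z)$ cancel and what survives is $\mathrm{v}_{0}'(z)+A_{n}'(z)/A_{n}(z)$, which is already the coefficient claimed in (\ref{sode}). The coefficient of $P_{n}(z)$ comes out as $B_{n}(z)A_{n}'(z)/A_{n}(z)-B_{n}'(z)+\big(B_{n}^{2}(z)+\mathrm{v}_{0}'(z)B_{n}(z)-\beta_{n}A_{n}(z)A_{n-1}(z)\big)$, and here the final parenthesis is exactly the left-hand side of ($S_{2}'$) with the sum $\sum_{j=0}^{n-1}A_{j}(z)$ transposed, hence equals $-\sum_{j=0}^{n-1}A_{j}(z)$. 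Substituting this in and moving the $P_{n}'$ and $P_{n}$ terms to the other side yields (\ref{sode}) precisely.

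I do not anticipate a genuine obstacle: the whole derivation is a short, purely algebraic manipulation, local in $z$. The division by $A_{n}(z)$ is harmless, since $A_{n}(z)$ is a nonzero rational function and the resulting identity among rational functions then extends to all $z$, in accordance with the Remark that ($S_{1}$), ($S_{2}$) and ($S_{2}'$) are valid on $\mathbb{C}\cup\{\infty\}$. The only points that require attention are the product-rule bookkeeping in the differentiation step and the sign tracking when assembling the coefficient of $P_{n}(z)$, together with the recognition that it is ($S_{2}'$)---not ($S_{1}$) or ($S_{2}$) individually---that closes the identity.
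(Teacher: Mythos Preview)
Your derivation is correct and is exactly the standard argument: differentiate the lowering relation, eliminate $P_{n-1}'$ via the raising relation, eliminate $P_{n-1}$ via the lowering relation, and invoke $(S_{2}')$ to rewrite $B_{n}^{2}+\mathrm{v}_{0}'B_{n}-\beta_{n}A_{n}A_{n-1}$ as $-\sum_{j=0}^{n-1}A_{j}$. The paper does not supply its own proof of this lemma; it simply cites \cite{Basor2009,Chen2005,Chen2010}, where the same computation is carried out, so your proposal matches the intended approach.
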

Substituting (\ref{anz}) and (\ref{bnz}) into ($S_{1}$), we obtain
\be\label{s1}
\frac{r_{n+1}(t_{1})+r_{n}(t_{1})}{z-t_{1}}=\frac{(z-\alpha_{n})R_{n}(t_{1})}{z-t_{1}}-2\alpha_{n}.
\ee
Letting $z\rightarrow\infty$, we have
\be\label{s11}
R_{n}(t_{1})=2\alpha_{n}.
\ee
Equating the residues at the simple pole $t$ from (\ref{s1}), we find
\be\label{s12}
r_{n+1}(t_{1})+r_{n}(t_{1})=(t_{1}-\alpha_{n})R_{n}(t_{1}).
\ee

Similarly, plugging (\ref{anz}) and (\ref{bnz}) into ($S_{2}'$), we obtain
\be\label{s2}
\frac{r_{n}^{2}(t_{1})}{(z-t_{1})^2}+\frac{2z r_{n}(t_{1})+\sum_{j=0}^{n-1}R_{j}(t_{1})}{z-t_{1}}+2n=\frac{\beta_{n}R_{n}(t_{1})R_{n-1}(t_{1})}{(z-t_{1})^2}
+\frac{2\beta_{n}(R_{n}(t_{1})+R_{n-1}(t_{1}))}{z-t_{1}}+4\beta_{n}.
\ee
Letting $z\rightarrow\infty$, we have
\be\label{s23}
r_{n}(t_{1})=2\beta_{n}-n.
\ee
Multiplying both sides of (\ref{s2}) by $(z-t_{1})^2$ and letting $z\rightarrow\infty$ gives
\be\label{s21}
r_{n}^{2}(t_{1})=\beta_{n}R_{n}(t_{1})R_{n-1}(t_{1}).
\ee
Then (\ref{s2}) becomes
$$
\frac{2z r_{n}(t_{1})+\sum_{j=0}^{n-1}R_{j}(t_{1})}{z-t_{1}}+2n=\frac{2\beta_{n}(R_{n}(t_{1})+R_{n-1}(t_{1}))}{z-t_{1}}+4\beta_{n}.
$$
Equating the residues at the simple pole $t_{1}$ produces
\be\label{s22a}
2t_{1}r_{n}(t_{1})-2\beta_{n}(R_{n}(t_{1})+R_{n-1}(t_{1}))+\sum_{j=0}^{n-1}R_{j}(t_{1})=0.
\ee
Using (\ref{s23}) and (\ref{s21}) to eliminate $\beta_{n}$ and $R_{n-1}(t_{1})$ from the above, we have
\be\label{s22}
2t_{1}r_{n}(t_{1})-(n+r_{n}(t_{1}))R_{n}(t_{1})-\frac{2r_{n}^2(t_{1})}{R_{n}(t_{1})}+\sum_{j=0}^{n-1}R_{j}(t_{1})=0.
\ee
$\mathbf{Remark.}$ The above equalities (\ref{s11}), (\ref{s12}), (\ref{s23}), (\ref{s21}) and (\ref{s22}) also appeared in \cite{Chen2005}, and are crucial for the following discussions.

\subsection{Non-linear Difference Equations Satisfied by $r_{n}(t_{1}),\;R_{n}(t_{1})$ and $\sigma_{n}(t_{1})$}
In this subsection, we would like to obtain the second order non-linear difference equations satisfied by $r_{n}(t_{1}),\;R_{n}(t_{1})$ and $\sigma_{n}(t_{1})$ respectively. These are the new results beyond \cite{Chen2005}.

Eliminating $\alpha_{n}(t_{1})$ from (\ref{s11}) and (\ref{s12}), we have
\be\label{s121}
r_{n+1}(t_{1})+r_{n}(t_{1})=\left(t_{1}-\frac{1}{2}R_{n}(t_{1})\right)R_{n}(t_{1}).
\ee
Similarly, eliminating $\beta_{n}(t_{1})$ from (\ref{s23}) and (\ref{s21}), we find
\be\label{s211}
r_{n}^{2}(t_{1})=\frac{1}{2}(n+r_{n}(t_{1}))R_{n}(t_{1})R_{n-1}(t_{1}).
\ee
Solving for $R_{n}(t_{1})$ from (\ref{s121}), a quadratic, and substituting either solution into (\ref{s211}), we find after clearing the square root, a non-linear second order difference equation for $r_{n}:=r_{n}(t_{1})$,
\bea\label{dr}
&&\left[(n+r_n)^2(r_{n-1}+r_n-t_{1}^2)r_{n+1}+(n+r_n)^2r_n r_{n-1}+2 n\: r_n^3-n (t_{1}^2-n) r_n^2-n^2 t_{1}^2\: r_n\right]^2\nonumber\\
&=&t_{1}^2 (n+r_n)^2(t_{1}^2-2 r_{n-1}-2 r_n)  \left[n\: r_n+(n+r_n) r_{n+1}\right]^2.
\eea

On the other hand, regarding (\ref{s211}) as a quadratic equation in $r_{n}(t_{1})$, and substituting either solution into (\ref{s121}), we find after clearing the square root, the following second order non-linear difference equation for $R_{n}:=R_{n}(t_{1})$,
\bea\label{dr1}
&&\left[(2 R_n^2-4 t_{1} R_n+R_{n-1}R_n-4n-4)R_{n+1}+2(R_n^2-2 t_{1} R_n-2 n)R_{n-1}+2 R_n(R_n-2 t_{1})^2\right]^2\nonumber\\
&=&R_{n-1}R_{n+1}(R_{n-1}R_n+8 n)(R_nR_{n+1}+8 n+8).
\eea
$\mathbf{Remark.}$ (\ref{dr1}) may be related to the discrete Painlev\'{e} IV equation \cite{Forrester, Ramani1991, Ramani1996}.

Now we introduce a quantity $\sigma_{n}(t_{1})$, defined by
\be\label{sig}
\sigma_{n}(t_{1}):=-\sum_{j=0}^{n-1}R_{j}(t_{1}).
\ee
We will see in the next subsection that $\sigma_{n}(t_{1})$ is the logarithmic derivative of $D_{n}(t_{1})$, i.e.
$$
\sigma_{n}(t_{1})=\frac{d}{dt_{1}}\ln D_{n}(t_{1}).
$$
It is easy to see that
\be\label{rs}
R_{n}(t_{1})=\sigma_{n}(t_{1})-\sigma_{n+1}(t_{1}).
\ee
Using (\ref{s23}), (\ref{sig}) and (\ref{rs}), (\ref{s22a}) becomes
$$
2t_{1}r_{n}(t_{1})-(n+r_{n}(t_{1}))(\sigma_{n-1}(t_{1})-\sigma_{n+1}(t_{1}))-\sigma_{n}(t_{1})=0.
$$
Then,
\be\label{rnt1}
r_{n}(t_{1})=\frac{n(\sigma_{n-1}(t_{1})-\sigma_{n+1}(t_{1}))+\sigma_{n}(t_{1})}{2t_{1}-\sigma_{n-1}(t_{1})+\sigma_{n+1}(t_{1})}.
\ee
With the aid of (\ref{s23}) and (\ref{rs}), it follows from (\ref{s21}) that
\be\label{rnts}
r_{n}^{2}(t_{1})=\frac{n+r_{n}(t_{1})}{2}(\sigma_{n}(t_{1})-\sigma_{n+1}(t_{1}))(\sigma_{n-1}(t_{1})-\sigma_{n}(t_{1})).
\ee
Substituting (\ref{rnt1}) into (\ref{rnts}), we obtain a second order difference equation satisfied by $\sigma_{n}:=\sigma_{n}(t_{1})$,
$$
2\left[\sigma_{n}+n(\sigma_{n-1}-\sigma_{n+1})\right]^2=
(\sigma_{n}-\sigma_{n+1})(\sigma_{n-1}-\sigma_{n})(\sigma_{n}+2nt_{1})(\sigma_{n+1}
-\sigma_{n-1}+2t_{1}),
$$
which is the discrete $\sigma$ form of the Painlev\'{e} IV equation.

\subsection{Painlev\'{e} IV, Chazy II, and the $\sigma$ Form}
In this subsection, we derive the second order differential equations satisfied by $R_{n}(t_{1})$ and $r_{n}(t_{1})$ respectively, which are related to the Painlev\'{e} IV and Chazy II. The Painlev\'{e} IV satisfied by $R_{n}(t_{1})$ or $\alpha_{n}(t_{1})$ was obtained by \cite{Chen2005}, but the Chazy equation satisfied by $r_{n}(t_{1})$ is a new one. We would like to say that Chazy equations appear in the random matrix theory regularly. For example, Witte, Forrester and Cosgrove \cite{Witte} obtained the Chazy equations when they studied the gap probability of Gaussian and Jacobi unitary ensembles. Recently, Lyu, Chen and Fan \cite{Lyu} also obtained the Chazy equation when they studied the gap probability of Gaussian unitary ensembles by using the different method. In the end, we find that $\sigma_{n}(t_{1})$, the logarithmic derivative of $D_{n}(t_{1})$, satisfies the Jimbo-Miwa-Okamoto $\sigma$ form of the Painlev\'{e} IV.

We begin with taking a derivative with respect to $t_{1}$ in the following equation,
$$
\int_{-\infty}^{\infty}P_{n}^2(x,t_{1})w(x,t_{1})dx=h_{n}(t_{1}),\;\;n=0,1,2,\ldots,
$$
we obtain
$$
h_{n}'(t_{1})=-B\:P_{n}^{2}(t_{1})\mathrm{e}^{-t_{1}^2}.
$$
It follows that
\be\label{hn}
(\ln h_{n}(t_{1}))'=-R_{n}(t_{1})
\ee
and
$$
[\ln\beta_{n}(t_{1})]'=(\ln h_{n}(t_{1}))'-(\ln h_{n-1}(t_{1}))'=R_{n-1}(t_{1})-R_{n}(t_{1}),
$$
where we have made use of (\ref{be}) in the first step.
Hence,
\be\label{e1}
\beta_{n}'(t_{1})=\beta_{n}R_{n-1}(t_{1})-\beta_{n}R_{n}(t_{1}).
\ee
Moreover, from (\ref{dnw}), (\ref{hn}) and (\ref{sig}) we obtain
\be\label{sigma}
\frac{d}{dt_{1}}\ln D_{n}(t_{1})=\frac{d}{dt_{1}}\ln\prod_{j=0}^{n-1}h_{j}(t_{1})=-\sum_{j=0}^{n-1}R_{j}(t_{1})=\sigma_{n}(t_{1}).
\ee
On the other hand, differentiating with respect to $t_{1}$ in the equation
$$
\int_{-\infty}^{\infty}P_{n}(x,t_{1})P_{n-1}(x,t_{1})w(x,t_{1})dx=0,\;\;n=0,1,2,\ldots,
$$
gives
\be\label{dp}
\frac{d}{dt_{1}}\mathrm{p}(n,t_{1})=r_{n}(t_{1}).
\ee
Now we have the Toda equations on $\alpha_{n}(t_{1})$ and $\beta_{n}(t_{1})$. These are also obtained by \cite{Chen2005}.
\begin{proposition}
\be\label{toda1}
\beta_{n}'(t_{1})=2\beta_{n}(\alpha_{n-1}-\alpha_{n})
\ee
\be\label{toda2}
\alpha_{n}'(t_{1})=2(\beta_{n}-\beta_{n+1})+1
\ee
\end{proposition}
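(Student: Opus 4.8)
The plan is to derive the two Toda equations directly from the structural relations already established, so that everything reduces to short algebraic manipulations rather than any new integration. For \eqref{toda1}, I would start from \eqref{e1}, namely $\beta_{n}'(t_{1})=\beta_{n}R_{n-1}(t_{1})-\beta_{n}R_{n}(t_{1})$, which was obtained by differentiating $\ln\beta_{n}=\ln h_{n}-\ln h_{n-1}$ and invoking \eqref{hn}. Then I would simply substitute the algebraic identity \eqref{s11}, $R_{n}(t_{1})=2\alpha_{n}$ (equivalently $R_{n-1}(t_{1})=2\alpha_{n-1}$), to convert the right-hand side into $2\beta_{n}(\alpha_{n-1}-\alpha_{n})$. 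That disposes of \eqref{toda1} with essentially no work.

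For \eqref{toda2} the natural route is through the coefficient $\mathsf{p}(n,t_{1})$. From \eqref{alp} we have $\alpha_{n}(t_{1})=\mathsf{p}(n,t_{1})-\mathsf{p}(n+1,t_{1})$, so differentiating in $t_{1}$ and applying \eqref{dp}, which says $\frac{d}{dt_{1}}\mathsf{p}(n,t_{1})=r_{n}(t_{1})$, gives $\alpha_{n}'(t_{1})=r_{n}(t_{1})-r_{n+1}(t_{1})$. It remains to show $r_{n}(t_{1})-r_{n+1}(t_{1})=2(\beta_{n}-\beta_{n+1})+1$. Here I would use \eqref{s23}, which states $r_{n}(t_{1})=2\beta_{n}-n$; then $r_{n+1}(t_{1})=2\beta_{n+1}-(n+1)$, and subtracting yields $r_{n}-r_{n+1}=2\beta_{n}-2\beta_{n+1}+1$, exactly as required. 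So again the proof is a two-line substitution once the right auxiliary identities are in hand.

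The only genuine content beyond bookkeeping is establishing \eqref{dp}, i.e. $\frac{d}{dt_{1}}\mathsf{p}(n,t_{1})=r_{n}(t_{1})$. The idea is to differentiate the orthogonality relation $\int_{-\infty}^{\infty}P_{n}(x,t_{1})P_{n-1}(x,t_{1})w(x,t_{1})\,dx=0$ with respect to $t_{1}$. The term coming from $\partial_{t_{1}}w(x,t_{1})=-B_{1}\,\delta(x-t_{1})\mathrm{e}^{-t_{1}^2}$ produces $-B_{1}P_{n}(t_{1},t_{1})P_{n-1}(t_{1},t_{1})\mathrm{e}^{-t_{1}^2}=-r_{n}(t_{1})h_{n-1}(t_{1})$ by the definition of $r_{n}$ in Theorem 2.1. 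The term from differentiating $P_{n-1}$ vanishes by orthogonality (it is a polynomial of degree $n-2$ against $P_{n}$), and the term from differentiating $P_{n}$ only picks up the $x^{n-1}$ coefficient, contributing $\frac{d}{dt_{1}}\mathsf{p}(n,t_{1})\cdot h_{n-1}(t_{1})$; balancing gives \eqref{dp}. Since the excerpt presents \eqref{e1}, \eqref{dp}, \eqref{s11} and \eqref{s23} as already available, the expected main obstacle — justifying differentiation under the integral sign through the Heaviside/delta term — is one the paper has in effect already cleared, so the proof of the Proposition itself is purely a matter of assembling these pieces.
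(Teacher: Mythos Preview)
Your proof is correct and follows exactly the paper's approach: the paper derives \eqref{toda1} from \eqref{s11} and \eqref{e1}, and \eqref{toda2} from \eqref{alp}, \eqref{s23} and \eqref{dp}, which is precisely what you do. Your additional justification of \eqref{dp} via differentiating the orthogonality relation is also the argument the paper sketches just before the Proposition.
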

\begin{proof}
The combination of (\ref{s11}) and (\ref{e1}) results in (\ref{toda1}). (\ref{toda2}) comes from (\ref{alp}), (\ref{s23}) and (\ref{dp}).
\end{proof}
The following lemma is very important for the derivation of the second order differential equations satisfied by $R_{n}(t_{1})$, $r_{n}(t_{1})$ and $\sigma_{n}(t_{1})$.
\begin{lemma}
$r_{n}(t_{1})$ and $R_{n}(t_{1})$ satisfy the following coupled Riccati equations:
\be\label{ri1}
r_{n}'(t_{1})=\frac{2r_{n}^2(t_{1})}{R_{n}(t_{1})}-(n+r_{n}(t_{1}))R_{n}(t_{1}),
\ee
\be\label{ri2}
R_{n}'(t_{1})=R_{n}^2(t_{1})-2t_{1}R_{n}(t_{1})+4r_{n}(t_{1}).
\ee
\end{lemma}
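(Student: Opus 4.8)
The plan is to prove both identities purely algebraically from material already in hand, without differentiating the definitions of $r_n(t_1)$ and $R_n(t_1)$ directly. The key inputs are the Toda equations (\ref{toda1})--(\ref{toda2}) just proved in the Proposition, together with the four ``residue'' identities $R_n=2\alpha_n$ from (\ref{s11}), $r_n=2\beta_n-n$ from (\ref{s23}), $r_n^2=\beta_n R_n R_{n-1}$ from (\ref{s21}), and the quadratic relation (\ref{s121}). These turn out to be exactly enough.

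For (\ref{ri2}): I would start from $R_n=2\alpha_n$ and differentiate, so that $R_n'(t_1)=2\alpha_n'(t_1)$, and then substitute (\ref{toda2}) to get $R_n'=4\beta_n-4\beta_{n+1}+2$. Using (\ref{s23}) in the form $2\beta_n=n+r_n$ and $2\beta_{n+1}=(n+1)+r_{n+1}$ to eliminate the $\beta$'s collapses this to $R_n'=2(r_n-r_{n+1})$. Finally, eliminating $r_{n+1}$ with (\ref{s121}), i.e. $r_{n+1}=t_1 R_n-\tfrac12 R_n^2-r_n$, gives $R_n'(t_1)=R_n^2(t_1)-2t_1 R_n(t_1)+4r_n(t_1)$, which is (\ref{ri2}).

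For (\ref{ri1}): I would differentiate $r_n=2\beta_n-n$, so $r_n'(t_1)=2\beta_n'(t_1)$, and substitute (\ref{toda1}) to obtain $r_n'=4\beta_n(\alpha_{n-1}-\alpha_n)$. Rewriting $\alpha_{n-1}=R_{n-1}/2$ and $\alpha_n=R_n/2$ via (\ref{s11}) yields $r_n'=2\beta_n R_{n-1}-2\beta_n R_n$. Now (\ref{s21}) gives $\beta_n R_{n-1}=r_n^2/R_n$, while (\ref{s23}) gives $2\beta_n R_n=(n+r_n)R_n$; substituting both produces $r_n'(t_1)=\dfrac{2r_n^2(t_1)}{R_n(t_1)}-(n+r_n(t_1))R_n(t_1)$, which is (\ref{ri1}).

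I do not anticipate a real obstacle: each step is the substitution of an already-established identity, and the argument is forced once one decides to route everything through the Toda flow rather than through the ladder operators. The only points requiring a word of care are that $R_n(t_1)\neq 0$, so that the divisions in (\ref{ri1}) and in the elimination step above are legitimate (this follows since $\beta_n>0$ and $r_n^2=\beta_n R_n R_{n-1}$, so $R_n$ can vanish only where $r_n$ does), and keeping the index shifts $n,n\pm1$ consistent when passing among $\alpha_n,\beta_n,r_n,R_n$. An alternative, heavier route would differentiate the defining formulas for $r_n$ and $R_n$, using $(\ln h_n)'=-R_n$ from (\ref{hn}) and the ladder relations (\ref{lowering})--(\ref{raising}) to evaluate $\tfrac{d}{dt_1}P_n(t_1,t_1)$; this also works but is longer, so I would present the Toda-based derivation.
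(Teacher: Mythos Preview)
Your argument is correct and essentially matches the paper's proof. The only cosmetic difference is that the paper works directly from the pre-Toda identities $\beta_n'=\beta_n R_{n-1}-\beta_n R_n$ (equation (\ref{e1})) and $\frac{d}{dt_1}\mathrm{p}(n,t_1)=r_n$ (equation (\ref{dp})) to reach the intermediate relations $r_n'=2\beta_n R_{n-1}-2\beta_n R_n$ and $R_n'=2(r_n-r_{n+1})$, whereas you first pass through the Toda equations (\ref{toda1})--(\ref{toda2}) and then use (\ref{s11}) and (\ref{s23}) to return to $R$'s and $r$'s; since (\ref{toda1})--(\ref{toda2}) were themselves derived from (\ref{e1}), (\ref{dp}), (\ref{s11}), (\ref{s23}), this is a trivially equivalent detour.
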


\begin{proof}
From (\ref{e1}) and (\ref{s21}), we have
\be\label{bnpt}
\beta_{n}'(t_{1})=\frac{r_{n}^2(t_{1})}{R_{n}(t_{1})}-\beta_{n}R_{n}(t_{1}).
\ee
It follows from (\ref{s23}) that
\be\label{bn}
\beta_{n}=\frac{n+r_{n}(t_{1})}{2},
\ee
Substituting (\ref{bn}) into (\ref{bnpt}), we obtain the Riccati equation satisfied by $r_{n}(t_{1})$,
$$
r_{n}'(t_{1})=\frac{2r_{n}^2(t_{1})}{R_{n}(t_{1})}-(n+r_{n}(t_{1}))R_{n}(t_{1}).
$$

Now we come to prove the second equation (\ref{ri2}), the Riccati equation satisfied by $R_{n}(t_{1})$. From (\ref{s11}) and (\ref{alp}) we find
$$
R_{n}(t_{1})=2\left[\mathrm{p}(n,t_{1})-\mathrm{p}(n+1,t_{1})\right],
$$
then with the aid of (\ref{dp}),
\bea
R_{n}'(t_{1})&=&2\left[\frac{d\:\mathrm{p}(n,t_{1})}{dt_{1}}-\frac{d\:\mathrm{p}(n+1,t_{1})}{dt_{1}}\right]\nonumber\\
&=&2\left[r_{n}(t_{1})-r_{n+1}(t_{1})\right].\nonumber
\eea
Using (\ref{s12}) to decrease the index $n+1$ to $n$, we have
\bea
R_{n}'(t_{1})
&=&2\left[2r_{n}(t_{1})-(t_{1}-\alpha_{n})R_{n}(t_{1})\right]\nonumber\\
&=&2\left[2r_{n}(t_{1})-\left(t_{1}-\frac{1}{2}R_{n}(t_{1})\right)R_{n}(t_{1})\right]\nonumber\\
&=&R_{n}^2(t_{1})-2t_{1}R_{n}(t_{1})+4r_{n}(t_{1}),\nonumber
\eea
where we have used (\ref{s11}) in the second equality. This establishes the lemma.
\end{proof}

\begin{theorem}\label{pc}
$R_{n}(t_{1})$ and $r_{n}(t_{1})$ satisfy the following second order differential equations,
\be\label{sod}
R_{n}''(t_{1})=\frac{(R_{n}'(t_{1}))^{2}}{2R_{n}(t_{1})}+\frac{3}{2}R_{n}^3(t_{1})-4t_{1}R_{n}^2(t_{1})+2(t_{1}^2-2n-1)R_{n}(t_{1})
\ee
and
$$
\left[r_{n}''(t_{1})+12r_{n}^2(t_{1})+8n\:r_{n}(t_{1})\right]^2=4t_{1}^2\left[(r_{n}'(t_{1}))^2+8r_{n}^3(t_{1})+8n\:r_{n}^2(t_{1})\right]
$$
respectively.
Moreover, letting $y(t_{1})=R_{n}(-t_{1})$, then $y(t_{1})$ satisfies the Painlev\'{e} IV equation \cite{Gromak},
\be\label{p4}
y''(t_{1})=\frac{(y'(t_{1}))^{2}}{2y(t_{1})}+\frac{3}{2}y^3(t_{1})+4t_{1}y^2(t_{1})+2(t_{1}^2-\alpha_{1})y(t_{1})+\frac{\beta_{1}}{y(t_{1})}
\ee
with $\alpha_{1}=2n+1,\;\beta_{1}=0$. Letting $v(t_{1})=-2r_{n}(t_{1})-\frac{2n}{3}$, then $v(t_{1})$ satisfies the first member of the Chazy II system \cite{Cosgrove},
$$
\left(v''(t_{1})-6v^2(t_{1})-\alpha_{2}\right)^2=4t_{1}^2(v''(t_{1})-4v^3(t_{1})-2\alpha_{2}\:v(t_{1})-\beta_{2})
$$
with $\alpha_{2}=-\frac{8n^2}{3},\;\beta_{2}=-\frac{64n^3}{27}$.
\end{theorem}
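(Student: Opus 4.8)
The engine for all four assertions is the coupled Riccati pair of Lemma~2.7, namely (\ref{ri1}) and (\ref{ri2}): each second-order equation is obtained by differentiating one of the two Riccati equations once, substituting the other for the derivative that appears, and then eliminating the companion unknown. I treat the four parts in turn.

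\emph{The equation (\ref{sod}) for $R_n$.} From (\ref{ri2}) one reads $r_n=\tfrac14\bigl(R_n'-R_n^2+2t_1R_n\bigr)$, so both $r_n$ and $r_n'$ become polynomial in $R_n,R_n',R_n''$. Differentiating (\ref{ri2}) gives $R_n''=2R_nR_n'-2R_n-2t_1R_n'+4r_n'$; substitute $r_n'$ from (\ref{ri1}), then substitute the expression for $r_n$ above, and expand $\bigl(R_n'-R_n^2+2t_1R_n\bigr)^2/(2R_n)$. The terms $R_nR_n'$ and $t_1R_n'$ cancel in pairs and the survivors assemble exactly into (\ref{sod}); no radicals occur, since $R_n$ itself is being eliminated and it appears only polynomially. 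For Painlev\'e IV, set $y(t_1)=R_n(-t_1)$, so that $y'(t_1)=-R_n'(-t_1)$ and $y''(t_1)=R_n''(-t_1)$. The only term of (\ref{sod}) odd in $t_1$ is $-4t_1R_n^2$ (the derivative enters squared, and $R_n^3$, $t_1^2R_n$ are even), so evaluating (\ref{sod}) at $-t_1$ produces $y''=\tfrac{(y')^2}{2y}+\tfrac32 y^3+4t_1y^2+2(t_1^2-2n-1)y$; comparison with (\ref{p4}) gives $\alpha_1=2n+1$, and the absence of a $1/y$ term gives $\beta_1=0$.

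\emph{The equation for $r_n$ and the Chazy~II reduction.} Now the companion to eliminate is $R_n$, which enters (\ref{ri1}) rationally; but (\ref{ri1}) is equivalent to the quadratic $(n+r_n)R_n^2+r_n'R_n-2r_n^2=0$. Differentiate (\ref{ri1}), substitute (\ref{ri2}) for $R_n'$ (this is the step that feeds in the explicit $t_1$), and use (\ref{ri1}) in its three guises $\tfrac{2r_n^2}{R_n}=r_n'+(n+r_n)R_n$, $\;(n+r_n)R_n^2=2r_n^2-r_n'R_n$, $\;\tfrac{8r_n^3}{R_n^2}=\tfrac{4r_nr_n'}{R_n}+4r_n(n+r_n)$ to kill every occurrence of $1/R_n$, $1/R_n^2$ and $R_n^2$; what remains is $r_n''+12r_n^2+8nr_n=2t_1\bigl[r_n'+2(n+r_n)R_n\bigr]$. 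Since $r_n'+2(n+r_n)R_n$ is the derivative of the quadratic at its root, it equals $\pm\sqrt{(r_n')^2+8r_n^3+8nr_n^2}$, and squaring (which also disposes of the branch) yields the stated identity $[r_n''+12r_n^2+8nr_n]^2=4t_1^2[(r_n')^2+8r_n^3+8nr_n^2]$. Finally the affine substitution $v(t_1)=-2r_n(t_1)-\tfrac{2n}{3}$ is chosen to annihilate the linear-in-$r_n$ cross terms: a short computation gives $r_n''+12r_n^2+8nr_n=-\tfrac12\bigl(v''-6v^2+\tfrac{8n^2}{3}\bigr)$ and $(r_n')^2+8r_n^3+8nr_n^2=\tfrac14\bigl((v')^2-4v^3+\tfrac{16n^2}{3}v+\tfrac{64n^3}{27}\bigr)$, so the squared identity turns into the first Chazy~II member with $\alpha_2=-\tfrac{8n^2}{3}$, $\beta_2=-\tfrac{64n^3}{27}$.

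\emph{Where the difficulty lies.} Almost everything here is forced algebra on the Riccati pair; the one place demanding care is the elimination for $r_n$, where the three forms of (\ref{ri1}) must be applied in the right order so that all $R_n$-denominators and the $R_n^2$ term cancel, leaving precisely the single term $2t_1\cdot 2(n+r_n)R_n$ to be absorbed into the square root. After squaring one should double-check that no spurious overall polynomial factor has been introduced, so that the resulting equation is exactly the one displayed rather than a multiple of it; the consistency check $(n+r_n)=\tfrac12(\,\cdot\,)$ type identities from (\ref{s23}) and the known Painlev\'e~IV for $R_n$ serve as a useful cross-validation.
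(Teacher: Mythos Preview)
Your proof is correct and follows essentially the same strategy as the paper: eliminate one of $r_n,R_n$ from the coupled Riccati pair (\ref{ri1})--(\ref{ri2}) to obtain a second-order ODE in the other, then perform the affine changes $y(t_1)=R_n(-t_1)$ and $v(t_1)=-2r_n(t_1)-\tfrac{2n}{3}$. The only cosmetic difference is in the $r_n$ part: the paper solves the quadratic (\ref{ri1}) for $R_n$ explicitly (with the square root) and plugs directly into (\ref{ri2}), whereas you differentiate (\ref{ri1}) first, substitute (\ref{ri2}) for $R_n'$, reduce via the three algebraic forms of (\ref{ri1}), and only at the end recognize $r_n'+2(n+r_n)R_n=\pm\sqrt{\Delta}$ from the quadratic formula; the two routes are algebraically equivalent.
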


\begin{proof}
We start from expressing $r_{n}(t_{1})$ in terms of $R_{n}(t_{1})$ and $R_{n}'(t_{1})$ by (\ref{ri2}),
\be\label{rn}
r_{n}(t_{1})=\frac{1}{4}\left[R_{n}'(t_{1})-R_{n}^2(t_{1})+2t_{1}R_{n}(t_{1})\right],
\ee
then
\be\label{rnp}
r_{n}'(t_{1})=\frac{1}{4}\left[R_{n}''(t_{1})-2R_{n}(t_{1})R_{n}'(t_{1})+2R_{n}(t_{1})+2tR_{n}'(t_{1})\right].
\ee
Substituting (\ref{rn}) and (\ref{rnp}) into (\ref{ri1}), we obtain the differential equation for $R_{n}(t_{1})$,
$$
R_{n}''(t_{1})=\frac{(R_{n}'(t_{1}))^{2}}{2R_{n}(t_{1})}+\frac{3}{2}R_{n}^3(t_{1})-4t_{1}R_{n}^2(t_{1})+2(t_{1}^2-2n-1)R_{n}(t_{1}).
$$
Letting $y(t_{1})=R_{n}(-t_{1})$, it is easy to see that $y(t_{1})$ satisfies a particular Painlev\'{e} IV,
$$
y''(t_{1})=\frac{(y'(t_{1}))^{2}}{2y(t_{1})}+\frac{3}{2}y^3(t_{1})+4t_{1}y^2(t_{1})+2(t_{1}^2-2n-1)y(t_{1}).
$$

Now we turn to prove the differential equation for $r_{n}(t_{1})$. Viewing (\ref{ri1}) as an equation on $R_{n}(t_{1})$, we find the solution is
$$
R_{n}(t_{1})=\frac{-r_{n}'(t_{1})\pm\sqrt{\Delta}}{2(n+r_{n}(t_{1}))},
$$
where
$$
\Delta:=(r_{n}'(t_{1}))^2+8r_{n}^3(t_{1})+8nr_{n}^2(t_{1}).
$$
Then plugging it into (\ref{ri2}), we have
$$
\frac{\left(-r_{n}'(t_{1})\pm\sqrt{\Delta}\right)\left(-r_{n}''(t_{1})-12r_{n}^2(t_{1})-8nr_{n}(t_{1})\pm 2t_{1}\sqrt{\Delta}\right)}
{2(n+r_{n}(t_{1}))\sqrt{\Delta}}=0.
$$
It follows a second order differential equation for $r_{n}(t_{1})$,
$$
-r_{n}''(t_{1})-12r_{n}^2(t_{1})-8nr_{n}(t_{1})\pm 2t_{1}\sqrt{(r_{n}'(t_{1}))^2+8r_{n}^3(t_{1})+8nr_{n}^2(t_{1})}=0,
$$
or
\be\label{difr}
\left[r_{n}''(t_{1})+12r_{n}^2(t_{1})+8nr_{n}(t_{1})\right]^2=4t_{1}^2\left[(r_{n}'(t_{1}))^2+8r_{n}^3(t_{1})+8nr_{n}^2(t_{1})\right].
\ee
Letting $v(t_{1})=-2r_{n}(t_{1})-\frac{2n}{3}$, or $r_{n}(t_{1})=-\frac{v(t_{1})}{2}-\frac{n}{3}$, and substituting it into (\ref{difr}), then $v(t_{1})$ satisfies the first member of the Chazy II system,
$$
\left(v''(t_{1})-6v^2(t_{1})+\frac{8n^2}{3}\right)^2=4t_{1}^2\left(v''(t_{1})-4v^3(t_{1})+\frac{16n^2}{3}v(t_{1})+\frac{64n^3}{27}\right).
$$
This finishes the proof of Theorem \ref{pc}.
\end{proof}

\begin{theorem}
$\sigma_{n}(t_{1})$ satisfies the Jimbo-Miwa-Okamoto $\sigma$ form of the Painlev\'{e} IV equation \cite{Jimbo1981}
\be\label{jmo}
(\sigma_{n}''(t_{1}))^2=4\left(t_{1}\sigma_{n}'(t_{1})-\sigma_{n}(t_{1})\right)^2
-4(\sigma_{n}'(t_{1})+\nu_{0})(\sigma_{n}'(t_{1})+\nu_{1})(\sigma_{n}'(t_{1})+\nu_{2}),
\ee
with parameters $\nu_{0}=\nu_{1}=0,\: \nu_{2}=2n$.
\end{theorem}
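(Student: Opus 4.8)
The plan is to collapse the coupled system already derived into a single second-order ODE for $r_{n}:=r_{n}(t_{1})$, and then to use the bridge identity $\sigma_{n}'(t_{1})=2r_{n}(t_{1})$ to rewrite it in the claimed $\sigma$-form. First I would establish that bridge. From (\ref{s11}) we have $\alpha_{j}=R_{j}(t_{1})/2$, so the definition (\ref{sig}) together with the telescoping sum (\ref{sum}) gives
\[
\sigma_{n}(t_{1})=-\sum_{j=0}^{n-1}R_{j}(t_{1})=-2\sum_{j=0}^{n-1}\alpha_{j}(t_{1})=2\,\mathrm{p}(n,t_{1}).
\]
Differentiating with respect to $t_{1}$ and invoking (\ref{dp}) then yields $\sigma_{n}'(t_{1})=2r_{n}(t_{1})$, hence $r_{n}=\tfrac12\sigma_{n}'$ and $r_{n}'=\tfrac12\sigma_{n}''$.

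Next I would eliminate $R_{n}:=R_{n}(t_{1})$ between two relations already in hand. Using $\sum_{j=0}^{n-1}R_{j}(t_{1})=-\sigma_{n}(t_{1})$, equation (\ref{s22}) becomes
\[
(n+r_{n})R_{n}+\frac{2r_{n}^{2}}{R_{n}}=2t_{1}r_{n}-\sigma_{n},
\]
while the Riccati equation (\ref{ri1}) reads
\[
\frac{2r_{n}^{2}}{R_{n}}-(n+r_{n})R_{n}=r_{n}'.
\]
Adding and subtracting these gives $\dfrac{4r_{n}^{2}}{R_{n}}=2t_{1}r_{n}-\sigma_{n}+r_{n}'$ and $2(n+r_{n})R_{n}=2t_{1}r_{n}-\sigma_{n}-r_{n}'$; multiplying the two together, $R_{n}$ drops out and one is left with
\[
8r_{n}^{2}(n+r_{n})=(2t_{1}r_{n}-\sigma_{n})^{2}-(r_{n}')^{2}.
\]
Note this uses only (\ref{s22}) and (\ref{ri1}); the second Riccati equation (\ref{ri2}) is not needed.

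Finally I would substitute $r_{n}=\tfrac12\sigma_{n}'$ and $r_{n}'=\tfrac12\sigma_{n}''$ into the last display. A short computation turns it into
\[
(\sigma_{n}'')^{2}=4\left(t_{1}\sigma_{n}'-\sigma_{n}\right)^{2}-4(\sigma_{n}')^{2}(\sigma_{n}'+2n),
\]
which is precisely (\ref{jmo}) once we read $(\sigma_{n}')^{2}(\sigma_{n}'+2n)=(\sigma_{n}'+\nu_{0})(\sigma_{n}'+\nu_{1})(\sigma_{n}'+\nu_{2})$ with $\nu_{0}=\nu_{1}=0$ and $\nu_{2}=2n$. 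The argument is almost entirely mechanical; the only genuine idea is the first step, namely recognizing that $\sigma_{n}=2\,\mathrm{p}(n,t_{1})$ so that its $t_{1}$-derivative is governed by $r_{n}$. After that the crux is the bookkeeping observation that (\ref{s22}) supplies the ``sum'' $(n+r_{n})R_{n}+2r_{n}^{2}/R_{n}$ and (\ref{ri1}) supplies the ``difference'' $2r_{n}^{2}/R_{n}-(n+r_{n})R_{n}$, so their product is a difference of squares in which $R_{n}$ disappears cleanly.
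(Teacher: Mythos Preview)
Your proof is correct and follows essentially the same route as the paper: combine (\ref{s22}) (rewritten via $\sum R_j=-\sigma_n$) with the Riccati relation (\ref{ri1}) as a sum/difference pair, multiply to eliminate $R_n$, and then substitute $r_n=\tfrac12\sigma_n'$ coming from $\sigma_n=2\,\mathrm{p}(n,t_1)$ and (\ref{dp}). The only cosmetic difference is that you establish the bridge $\sigma_n'=2r_n$ first rather than last, and you note explicitly that (\ref{ri2}) is not needed here.
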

\begin{proof}
By using (\ref{sig}), (\ref{s22}) becomes
\be\label{eq1}
(n+r_{n}(t_{1}))R_{n}(t_{1})+\frac{2r_{n}^2(t_{1})}{R_{n}(t_{1})}=2tr_{n}(t_{1})-\sigma_{n}(t_{1}).
\ee
From (\ref{ri1}), we have
\be\label{eq2}
(n+r_{n}(t_{1}))R_{n}(t_{1})-\frac{2r_{n}^2(t_{1})}{R_{n}(t_{1})}=-r_{n}'(t_{1}).
\ee
The difference and sum of (\ref{eq1}) and (\ref{eq2}) give
\be\label{eq3}
\frac{4r_{n}^2(t_{1})}{R_{n}(t_{1})}=2t_{1}r_{n}(t_{1})-\sigma_{n}(t_{1})+r_{n}'(t_{1})
\ee
and
\be\label{eq4}
2(n+r_{n}(t_{1}))R_{n}(t_{1})=2t_{1}r_{n}(t_{1})-\sigma_{n}(t_{1})-r_{n}'(t_{1}),
\ee
respectively.
\\
Then the product of (\ref{eq3}) and (\ref{eq4}) leads to
\be\label{eq5}
8(n+r_{n}(t_{1}))r_{n}^2(t_{1})=(2t_{1}r_{n}(t_{1})-\sigma_{n}(t_{1}))^2-(r_{n}'(t_{1}))^2.
\ee
Noting that from (\ref{sig}), (\ref{s11}) and (\ref{sum}), we find
$$
\sigma_{n}(t_{1})=-2\sum_{j=0}^{n-1}\alpha_{j}=2\mathrm{p}(n,t_{1}).
$$
Then it follows from (\ref{dp}) that
$$
\sigma_{n}'(t_{1})=2\frac{d}{dt_{1}}\mathrm{p}(n,t_{1})=2r_{n}(t_{1}),
$$
or
\be\label{rnt}
r_{n}(t_{1})=\frac{1}{2}\sigma_{n}'(t_{1}).
\ee
Substituting (\ref{rnt}) into (\ref{eq5}), we obtain a second order differential equation satisfied by $\sigma_{n}(t_{1})$,
$$
(\sigma_{n}''(t_{1}))^2=4\left(t_{1}\sigma_{n}'(t_{1})-\sigma_{n}(t_{1})\right)^2-4(\sigma_{n}'(t_{1}))^2(\sigma_{n}'(t_{1})+2n),
$$
which is just the Jimbo-Miwa-Okamoto $\sigma$ form of the Painlev\'{e} IV equation, $\mathrm{P_{IV}}(0,0,2n)$.
\end{proof}
This result is coincident with Tracy and Widom \cite{Tracy} when they studied the largest eigenvalue distribution in the Gaussian unitary ensemble. Therefore, $\sigma_{n}(t_{1})$ satisfies both the continuous and discrete $\sigma$ form of the Painlev\'{e} IV.

\subsection{Large $n$ Behavior of the Orthogonal Polynomials and Double Scaling Analysis}
In this subsection, we consider the large $n$ behavior of the monic orthogonal polynomials $P_{n}(z)$. We show that, as $n\rightarrow\infty$, $P_{n}(z)$ satisfies the confluent forms of Heun's differential equation. We also give the large $n$ asymptotics of $R_{n}(t_{1}), r_{n}(t_{1})$ and $\sigma_{n}(t_{1})$ under a double scaling, which gives rise to the Painlev\'{e} XXXIV equation.
\begin{theorem}
As $n\rightarrow\infty$, $\mathrm{(i)}$ if $B_{1}>0$, then $\hat{P}_{n}(u):=P_{n}\left(\frac{u}{\sqrt{2}}+t_{1}\right)$ satisfies the biconfluent Heun equation (BHE) \cite{Ronveaux}
\be\label{bhe}
\hat{P}_{n}''(u)-\left(\frac{\gamma}{u}+\delta+u\right)\hat{P}_{n}'(u)+\frac{\alpha\: u-q}{u}\hat{P}_{n}(u)=0,
\ee
where $\gamma=-1,\; \delta=\sqrt{2}t_{1},\; \alpha=0,\; q=-\frac{4\sqrt{3}n^{\frac{3}{2}}}{9}$;\\
$\mathrm{(ii)}$ if $B_{1}<0$, then $\hat{P}_{n}(u):=P_{n}\left(\frac{u}{\sqrt{2}}+t_{1}\right)$ satisfies (\ref{bhe}) with parameters $\gamma=-1,\; \delta=\sqrt{2}t_{1},\; \alpha=0,\; q=\frac{4\sqrt{3}n^{\frac{3}{2}}}{9}$.
\end{theorem}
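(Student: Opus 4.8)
The plan is to pass to the limit in the exact second-order differential equation (\ref{sode}) of Lemma \ref{de} after the affine substitution $z=\frac{u}{\sqrt 2}+t_1$. First I would insert $\mathrm v_0'(z)=2z$, $A_n(z)=2+\frac{R_n(t_1)}{z-t_1}$, $B_n(z)=\frac{r_n(t_1)}{z-t_1}$ and $\sum_{j=0}^{n-1}A_j(z)=2n-\frac{\sigma_n(t_1)}{z-t_1}$ (the last from (\ref{sig})) into (\ref{sode}); using $\frac{A_n'(z)}{A_n(z)}=\frac{-R_n}{(z-t_1)[2(z-t_1)+R_n]}$ and collecting terms, the coefficient of $P_n$ becomes $-\frac{r_n}{(z-t_1)^2}+\frac{r_nR_n}{(z-t_1)^2[2(z-t_1)+R_n]}+2n-\frac{\sigma_n}{z-t_1}$. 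Setting $z-t_1=\frac{u}{\sqrt 2}$ and dividing by the leading coefficient, $\hat P_n(u)=P_n\!\left(\frac{u}{\sqrt 2}+t_1\right)$ satisfies the exact equation
\[
\hat P_n''(u)-\Big(u+\sqrt 2\,t_1-\frac{R_n}{u(\sqrt 2\,u+R_n)}\Big)\hat P_n'(u)
+\Big(n-\frac{\sqrt 2\,r_n}{u(\sqrt 2\,u+R_n)}-\frac{\sigma_n}{\sqrt 2\,u}\Big)\hat P_n(u)=0 ,
\]
whose coefficients still carry the $n$-dependent data $R_n,r_n,\sigma_n$.

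The second ingredient is the large-$n$ behaviour of that data, read off from the relations already established. Positing $R_n(t_1)\sim c\,n^{1/2}$ and $r_n(t_1)\sim d\,n$ with $c,d$ independent of $t_1$ (so $R_n',r_n'$ are of lower order), the dominant-power balance in the Riccati pair (\ref{ri1})--(\ref{ri2}) (equivalently in (\ref{sod})) forces $c^2+4d=0$ and $2d^2=c^2(1+d)$, hence $d=-\tfrac23$ and $c^2=\tfrac83$; the sign of $c$ equals that of $B_1$ because $R_n=B_1P_n^2(t_1,t_1)\mathrm e^{-t_1^2}/h_n(t_1)$ has the sign of $B_1$. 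Consequently $R_n\sim\pm\tfrac{2\sqrt 6}{3}\,n^{1/2}$, $\beta_n=\tfrac{n+r_n}{2}\sim\tfrac n6$, and, since $\sigma_n=-\sum_{j=0}^{n-1}R_j$ with the sum dominated by its large-$j$ part, $\sigma_n\sim-\left(\pm\tfrac{2\sqrt 6}{3}\right)\tfrac23\,n^{3/2}=\mp\tfrac{4\sqrt 6}{9}\,n^{3/2}$, the upper signs corresponding to $B_1>0$.

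Then I would let $n\to\infty$ in the displayed equation with $u$ fixed. Since $|R_n|\to\infty$ we have $\frac{R_n}{u(\sqrt 2\,u+R_n)}\to\frac1u$, so the coefficient of $\hat P_n'$ converges to $u+\sqrt 2\,t_1-\frac1u=\frac{\gamma}{u}+\delta+u$ with $\gamma=-1$, $\delta=\sqrt 2\,t_1$. In the coefficient of $\hat P_n$, the group $n-\frac{\sqrt 2\,r_n}{u(\sqrt 2\,u+R_n)}$ is $O(n)$ for fixed $u$ (because $r_n/R_n=O(n^{1/2})$), while $-\frac{\sigma_n}{\sqrt 2\,u}$ is $O(n^{3/2})$; keeping the dominant part, the coefficient of $\hat P_n$ is $-\frac{\sigma_n}{\sqrt 2\,u}$ up to lower order, which has the form $\frac{\alpha u-q}{u}$ with $\alpha=0$ and $q=\frac{\sigma_n}{\sqrt 2}$. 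By the previous paragraph $q\sim\mp\tfrac{4\sqrt 3}{9}\,n^{3/2}$ (using $\sqrt 6/\sqrt 2=\sqrt 3$): $q=-\tfrac{4\sqrt 3}{9}n^{3/2}$ for $B_1>0$ and $q=+\tfrac{4\sqrt 3}{9}n^{3/2}$ for $B_1<0$. This is exactly (\ref{bhe}) with the stated parameters, proving (i) and (ii).

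The real difficulty is the second step: dominant balance only pins down a candidate leading order, and one has to argue that the particular solutions $R_n(t_1),r_n(t_1),\sigma_n(t_1)$ selected by orthogonality actually lie on this branch (for a general discontinuous Gaussian weight one might a priori expect a ``perturbative'' branch with $R_n\to0$), and that the oscillatory and subleading corrections do not disturb the limit; a rigorous justification would need a Riemann--Hilbert analysis of the kind used in the works cited in Sec.\ 2. A minor point to flag is that, while the coefficient of $\hat P_n'$ genuinely converges, the coefficient of $\hat P_n$ diverges and its identification with $\frac{\alpha u-q}{u}$ holds only after discarding the subleading $O(n)$ part; so ``as $n\to\infty$, $\hat P_n$ satisfies (\ref{bhe})'' is to be read in the sense that the equation for $\hat P_n$ is, to leading order in $n$, the biconfluent Heun equation with $q$ kept in its $n$-dependent leading form.
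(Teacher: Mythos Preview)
Your argument is correct and follows essentially the same route as the paper: insert the explicit $A_n,B_n$ into the second-order equation \eqref{sode}, identify the large-$n$ behaviour of the auxiliary quantities by dominant balance, and pass to the limit after the substitution $z=\tfrac{u}{\sqrt2}+t_1$.

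The only organisational difference worth noting is how $\sum_{j=0}^{n-1}R_j$ is handled. You keep $\sigma_n$ explicit and recover its leading order by summing $R_j\sim c\,j^{1/2}$ over $j$; the paper instead eliminates $\sum R_j$ via \eqref{s22} and then replaces $r_n$ by \eqref{rn}, so that all coefficients of the ODE are written in terms of $R_n,R_n'$ alone, after which the single expansion \eqref{ex1} (obtained by substituting a power-series ansatz into the Painlev\'e~IV equation \eqref{sod}) suffices. The paper's route is marginally cleaner because it avoids the uniformity-in-$j$ assumption implicit in your summation, but both lead to the same limiting equation with the same parameters. Your explicit acknowledgement that the branch selection and the discarding of the $O(n)$ subleading term are heuristic is accurate; the paper proceeds at the same formal level without flagging this.
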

\begin{proof}
Substituting (\ref{anz}) and (\ref{bnz}) into (\ref{sode}), and using (\ref{s22}) to eliminate $\sum_{j=0}^{n-1}R_{j}(t_{1})$, we obtain
\bea\label{ode}
&&P_n''(z)+P_n'(z) \left(\frac{R_n(t_1)}{(z-t_1) (2z-2 t_1+R_n(t_1))}-2z\right)+P_n(z) \bigg(2n-\frac{r_n(t_1)}{(z-t_1)^2}\nonumber\\
&+&\frac{r_n(t_1)R_n(t_1)}{(z-t_1)^2 (2z-2 t_1+R_n(t_1))}+\frac{2 r_n^2(t_1)+(n+r_n(t_1)) R_n^2(t_1)-2 t_1 r_n(t_1)R_n(t_1)}{(z-t_1)R_n(t_1)}\bigg)=0.\nonumber\\
\eea
Replacing $r_n(t_1)$ with the expression of $R_n(t_1)$ from (\ref{rn}), (\ref{ode}) becomes
\bea\label{ode1}
&&P_n''(z)+P_n'(z) \left(\frac{R_n(t_1)}{(z-t_1) (2z-2 t_1+R_n(t_1))}-2z\right)+P_n(z)\bigg(2 n-\frac{R_n'(t_1)-R_n^2(t_1)+2 t_1 R_n(t_1)}{4 (z-t_1)^2}\nonumber\\[10pt]
&+&\frac{R_n(t_1) \left(R_n'(t_1)-R_n^2(t_1)+2 t_1 R_n(t_1)\right)}{4 (z-t_1)^2 (2z-2 t_1+R_n(t_1))}+\frac{(R_n'(t_1))^2-R_n^4(t_1)+4 t_1 R_n^3(t_1)+(8 n-4 t_1^2) R_n^2(t_1)}{8(z-t_1) R_n(t_1)}\bigg)=0.\nonumber\\
\eea
Note that the coefficients of $P_n(z)$ and $P_n'(z)$ only depend on $R_n(t_1)$ and $R_n'(t_1)$ in (\ref{ode1}). Now we consider the large $n$ behavior of $R_n(t_1)$. Let $\hat{R}_n(t_1)$ satisfy a quadratic equation obtained from the non-derivative part of (\ref{sod}),
$$
3\hat{R}_{n}^2(t_{1})-8t_{1}\hat{R}_{n}(t_{1})+4(t_{1}^2-2n-1)=0
$$
with the solutions
$$
\hat{R}_{n}(t_{1})=\frac{2}{3} \left(2 t_{1}\pm\sqrt{t_{1}^2+6n+3}\right).
$$

(i) If $B_{1}>0$, we choose
$$
\hat{R}_{n}(t_{1})=\frac{2}{3} \left(2 t_{1}+\sqrt{t_{1}^2+6n+3}\right).
$$
As $n\rightarrow\infty$,
$$
\hat{R}_{n}(t_{1})=\frac{2\sqrt{6n}}{3}+\frac{4 t_{1}}{3}+\frac{\sqrt{6}(t_{1}^2+3)}{18\sqrt{n}}-\frac{\sqrt{6}(t_{1}^2+3)^2}{432n^{\frac{3}{2}}}
+\frac{\sqrt{6}(t_{1}^2+3)^3}{5184n^{\frac{5}{2}}}+O(n^{-\frac{7}{2}}).
$$
Hence we suppose the expansion, as $n\rightarrow\infty$,
$$
R_{n}(t_{1})=\sum_{j=0}^{\infty}a_{j}(t_1)n^{\frac{1-j}{2}}.
$$
Substituting the above into (\ref{sod}), we obtain
\be\label{ex1}
R_{n}(t_{1})=\frac{2\sqrt{6n}}{3}+\frac{4 t_{1}}{3}+\frac{\sqrt{6}(t_{1}^2+3)}{18\sqrt{n}}-\frac{\sqrt{6}( t_{1}^4+6  t_{1}^2+15) }{432n^{\frac{3}{2}}}+\frac{t_{1}}{18n^2}+\frac{\sqrt{6} (t_{1}^6+9  t_{1}^4-117  t_{1}^2+81) }{5184n^{\frac{5}{2}}}+O(n^{-3}).
\ee
Plugging (\ref{ex1}) into (\ref{ode1}), we see that as $n\rightarrow\infty$,
$$
P_{n}''(z)+\left(\frac{1}{z-t_1}-2 z\right)P_{n}'(z)+\frac{4 \sqrt{6} n^{\frac{3}{2}}}{9 (z-t_1)}P_{n}(z)=0.
$$
Let
$$
z=\frac{u}{\sqrt{2}}+t_{1}.
$$
Then $\hat{P}_{n}(u):=P_{n}\left(\frac{u}{\sqrt{2}}+t_{1}\right)$
satisfies the biconfluent Heun equation (\ref{bhe}) with parameters $\gamma=-1,\; \delta=\sqrt{2}t_{1},\; \alpha=0,\; q=-\frac{4\sqrt{3}n^{\frac{3}{2}}}{9}$.

(ii) If $B_{1}<0$, we choose
$$
\hat{R}_{n}(t_{1})=\frac{2}{3} \left(2 t_{1}-\sqrt{t_{1}^2+6n+3}\right).
$$
As $n\rightarrow\infty$,
$$
\hat{R}_{n}(t_{1})=-\frac{2\sqrt{6n}}{3}+\frac{4 t_{1}}{3}-\frac{\sqrt{6}(t_{1}^2+3)}{18\sqrt{n}}+\frac{\sqrt{6}(t_{1}^2+3)^2}{432n^{\frac{3}{2}}}
-\frac{\sqrt{6}(t_{1}^2+3)^3}{5184n^{\frac{5}{2}}}+O(n^{-\frac{7}{2}}).
$$
Similarly, we suppose that as $n\rightarrow\infty$,
$$
R_{n}(t_{1})=\sum_{j=0}^{\infty}b_{j}(t_1)n^{\frac{1-j}{2}}.
$$
Substituting it into (\ref{sod}), we obtain
\be\label{ex2}
R_{n}(t_{1})=-\frac{2\sqrt{6n}}{3}+\frac{4 t_{1}}{3}-\frac{\sqrt{6}(t_{1}^2+3)}{18\sqrt{n}}+\frac{\sqrt{6}( t_{1}^4+6  t_{1}^2+15) }{432n^{\frac{3}{2}}}+\frac{t_{1}}{18n^2}-\frac{\sqrt{6} (t_{1}^6+9  t_{1}^4-117  t_{1}^2+81) }{5184n^{\frac{5}{2}}}+O(n^{-3}).
\ee
Plugging (\ref{ex2}) into (\ref{ode1}), we see that as $n\rightarrow\infty$,
$$
P_{n}''(z)+\left(\frac{1}{z-t_1}-2 z\right)P_{n}'(z)-\frac{4 \sqrt{6} n^{\frac{3}{2}}}{9 (z-t_1)}P_{n}(z)=0.
$$
Let
$$
z=\frac{u}{\sqrt{2}}+t_{1}.
$$
Then $\hat{P}_{n}(u):=P_{n}\left(\frac{u}{\sqrt{2}}+t_{1}\right)$
satisfies the biconfluent Heun equation ($\ref{bhe}$) with parameters $\gamma=-1,\; \delta=\sqrt{2}t_{1},\; \alpha=0,\; q=\frac{4\sqrt{3}n^{\frac{3}{2}}}{9}$.
\end{proof}
\noindent $\mathbf{Remark.}$ There are four standard confluent forms of Heun's equation: confluent Heun equation (CHE), doubly confluent Heun equation (DCHE), biconfluent Heun equation (BHE) and triconfluent Heun equation (THE) \cite{Ronveaux}. The Heun's equation and its confluent forms play an important role in mathematical physics. Many known special functions, such as hypergeometric functions, Mathieu functions and spheroidal functions are solutions of Heun-class equations \cite{Slavyanov}.

\begin{theorem}\label{cor}
Assume that $n\rightarrow\infty,\; t_{1}=\sqrt{2n}+2^{-\frac{1}{2}}n^{-\frac{1}{6}}s$ and $s$ is fixed. Then the large $n$ asymptotics of $R_{n}(t_{1}), r_{n}(t_{1})$ and $\sigma_{n}(t_{1})$ are given by
$$
R_{n}(t_{1})=n^{-\frac{1}{6}}v_{1}(s)+n^{-\frac{1}{2}}v_{2}(s)+n^{-\frac{5}{6}}v_{3}(s)+O(n^{-\frac{7}{6}}),
$$
\be\label{r1}
r_{n}(t_{1})=\frac{n^{\frac{1}{3}}v_{1}(s)}{\sqrt{2}}+\frac{\sqrt{2}}{4} \left( v_{1}'(s)+2 v_{2}(s)\right)+\frac{\sqrt{2} s\: v_{1}(s)-v_{1}^2(s)+\sqrt{2} v_{2}'(s)+2 \sqrt{2} v_{3}(s)}{4 n^{\frac{1}{3}}}+O(n^{-\frac{2}{3}}),
\ee
and
\bea\label{sigma1}
\sigma_{n}(t_{1})&=&n^{\frac{1}{6}}\left(s\: v_{1}(s)-\frac{v_{1}^2(s)}{\sqrt{2}}-\frac{(v_{1}'(s))^2}{4 v_{1}(s)}\right)+n^{-\frac{1}{6}}\left(s\: v_{2}(s)-\sqrt{2} v_{1}(s) v_{2}(s)-\frac{ v_{1}'(s) v_{2}'(s)}{2 v_{1}(s)}+\frac{ (v_{1}'(s))^2 v_{2}(s)}{4 v_{1}^2(s)}\right)\nonumber\\[10pt]
&+&n^{-\frac{1}{2}}\bigg(\frac{s^2\: v_{1}(s)}{4}-\sqrt{2} v_{1}(s) v_{3}(s)-\frac{s\: v_{1}^2(s)}{2 \sqrt{2}}+\frac{v_{1}^3(s)}{8}-\frac{v_{2}^2(s)}{\sqrt{2}}+s\: v_{3}(s)\nonumber\\[10pt]
&-&\frac{2v_{1}'(s)v_{3}'(s)+(v_{2}'(s))^2}{4v_{1}(s)}+\frac{v_{1}'(s)(v_{1}'(s)v_{3}(s)+2v_{2}(s)v_{2}'(s))}{4v_{1}^2(s)}-\frac{(v_{1}'(s))^2\: v_{2}^2(s)}{4v_{1}^3(s)}\bigg)+O(n^{-\frac{5}{6}}),
\eea
respectively. Here $v_{1}(s),v_{2}(s)$ and $v_{3}(s)$ satisfy the differential equations (\ref{us}),(\ref{vs}) and (\ref{ws}), and the large $s$ asymptotics are given by (\ref{us1}), (\ref{vs1}) and (\ref{ws1}). In addition, $\hat{v}(s):=-\frac{v_{1}(s)}{\sqrt{2}}$ satisfies the Painlev\'{e} XXXIV equation \cite{Bogatskiy,Its}
\be\label{p34}
\hat{v}''(s)=4\hat{v}^2(s)+2s\:\hat{v}(s)+\frac{(\hat{v}'(s))^2}{2\hat{v}(s)}.
\ee
\end{theorem}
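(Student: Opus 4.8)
We outline the strategy. The plan is to run the exact closed second-order ODE (\ref{sod}) for $R_n(t_1)$ through the double scaling, read off the hierarchy of equations for $v_1,v_2,v_3$, and then transport the resulting expansion of $R_n$ algebraically into expansions of $r_n$ and $\sigma_n$ using (\ref{rn}) and the identity behind (\ref{eq1}); the Painlev\'e XXXIV statement then drops out of the leading-order equation by a linear change of unknown.

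First I would substitute $t_1=\sqrt{2n}+2^{-1/2}n^{-1/6}s$ into (\ref{sod}), using $\frac{d}{dt_1}=\sqrt2\,n^{1/6}\frac{d}{ds}$ and $t_1^2-2n-1=2n^{1/3}s+\tfrac12 n^{-1/3}s^2-1$, and insert the ansatz
\[
R_n(t_1)=n^{-1/6}v_1(s)+n^{-1/2}v_2(s)+n^{-5/6}v_3(s)+O(n^{-7/6}).
\]
Collecting the coefficient of $n^{1/6}$ gives the closed equation $v_1''=\frac{(v_1')^2}{2v_1}-2\sqrt2\,v_1^2+2s\,v_1$, which is (\ref{us}); the substitution $v_1=-\sqrt2\,\hat v$ converts it into exactly (\ref{p34}). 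The coefficient of $n^{-1/6}$ yields a second-order linear inhomogeneous equation for $v_2$ with $v_1$-dependent coefficients and forcing $-2v_1$, namely (\ref{vs}); the coefficient of $n^{-1/2}$ gives the analogous linear equation (\ref{ws}) for $v_3$, forced by $v_1,v_2$ and their derivatives. Throughout, prime denotes $d/ds$.

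Next, the expansion (\ref{r1}) of $r_n$ follows by substituting the expansions of $R_n$ and of $t_1$ into $r_n=\tfrac14[R_n'-R_n^2+2t_1R_n]$ from (\ref{rn}), with $R_n'=\sqrt2\,[v_1'+n^{-1/3}v_2'+n^{-2/3}v_3'+\cdots]$; the $n^{1/2}$-terms carried by $2t_1R_n$ alone fix the leading term $r_n=\frac{v_1}{\sqrt2}n^{1/3}+\cdots$. For $\sigma_n$ I would not integrate $\sigma_n'=2r_n$ (which would leave an undetermined constant) but instead use the algebraic relation obtained by combining (\ref{eq1}) with (\ref{ri1}),
\[
\sigma_n(t_1)=2t_1r_n(t_1)-\bigl(n+r_n(t_1)\bigr)R_n(t_1)-\frac{2r_n^2(t_1)}{R_n(t_1)},
\]
and substitute the expansions already found. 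The decisive feature is a chain of cancellations: the three $O(n^{5/6})$ contributions cancel ($2t_1r_n$ contributes $2v_1n^{5/6}$, while $(n+r_n)R_n$ and $2r_n^2/R_n$ each contribute $v_1n^{5/6}$), the $O(n^{1/2})$ contributions cancel as well, and only at order $n^{1/6}$ does the surviving leading term $n^{1/6}\bigl(s\,v_1-\frac{v_1^2}{\sqrt2}-\frac{(v_1')^2}{4v_1}\bigr)$ of (\ref{sigma1}) appear; the lower-order coefficients in (\ref{sigma1}) come out the same way, and agreement with $\sigma_n'=2r_n$ from (\ref{rnt}) is a convenient internal check.

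Finally, the large-$s$ asymptotics (\ref{us1}), (\ref{vs1}), (\ref{ws1}) are obtained by dominant-balance analysis of (\ref{us}), (\ref{vs}), (\ref{ws}): since the exact $R_n(t_1)$ vanishes as the jump recedes past the spectral edge, the relevant solution is the one with $v_1(s),v_2(s),v_3(s)\to0$ as $s\to+\infty$, which singles out the subdominant solution of each (linearized) equation and generates the expansions there term by term; the Painlev\'e XXXIV transcendent so selected is the one connecting to the edge behaviour of \cite{Xu,Bogatskiy,Its}. I expect the main obstacle to be the bookkeeping in the middle steps — the $R_n$ expansion must be carried to three terms with the $t_1$-expansion kept to matching precision, and the successive cancellations that drop $\sigma_n$ from the naive $O(n^{5/6})$ down to $O(n^{1/6})$ have to be verified order by order, since a slip at any stage corrupts every displayed coefficient in (\ref{sigma1}). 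A secondary point is rigour: the argument as described is a matched/formal asymptotic expansion, and proving that $R_n(t_1)$ genuinely admits the posited expansion with the stated error terms would require an a priori estimate (e.g.\ from a Riemann--Hilbert analysis of the edge jump), which I would invoke rather than reprove.
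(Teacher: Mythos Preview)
Your approach is essentially the paper's own: substitute the double-scaling ansatz into (\ref{sod}), read off the hierarchy (\ref{us})--(\ref{ws}), push the $R_n$-expansion through (\ref{rn}) to get (\ref{r1}), and use the algebraic identity coming from (\ref{eq1}) (the paper eliminates $r_n$ first via (\ref{rn}) to write $\sigma_n$ purely in $R_n,R_n'$, but this is equivalent to your route). The Painlev\'e~XXXIV reduction and the cancellation mechanism you describe for $\sigma_n$ are exactly what happens.

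There is, however, one concrete slip in your selection argument for the large-$s$ asymptotics. You assert that the relevant branch is the one with $v_1(s),v_2(s),v_3(s)\to 0$ as $s\to+\infty$, but the asymptotics (\ref{us1})--(\ref{ws1}) that you are trying to reproduce have $v_1(s)\sim s/\sqrt{2}$, $v_2(s)\sim -1/(2\sqrt{2})$, $v_3(s)\sim -s^2/(16\sqrt{2})$, none of which vanish. The physical input ``$R_n\to 0$ when the jump leaves the spectrum'' is a statement about $n\to\infty$ with $s$ fixed, not about $s\to\infty$ with $n$ already scaled out; it is what forces the ansatz to \emph{start} at $n^{-1/6}$, not what picks the decaying solution of (\ref{us}). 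The paper instead motivates the leading $s$-behaviour by first solving the algebraic (non-derivative) part of the rescaled equation (\ref{sod}) as a quadratic in $R_n$, choosing the root consistent with $\tilde R_n(s)\to 0$ as $n\to\infty$, and then matching the large-$n$ expansion of that root against the ansatz --- this is what produces $v_1(s)\sim s/\sqrt{2}$ and fixes the subsequent expansions. Your dominant-balance idea would work too, but you must balance $2\sqrt{2}\,v_1^2$ against $2s\,v_1$ (giving $v_1\sim s/\sqrt{2}$), not look for a decaying solution.
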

\begin{proof}
By changing variable $t_{1}$ to $s$ from the relation $t_{1}=\sqrt{2n}+2^{-\frac{1}{2}}n^{-\frac{1}{6}}s$, and denoting
$$
R_{n}(t_{1})=R_{n}(\sqrt{2n}+2^{-\frac{1}{2}}n^{-\frac{1}{6}}s)=:\tilde{R}_{n}(s),
$$
$$
r_{n}(t_{1})=r_{n}(\sqrt{2n}+2^{-\frac{1}{2}}n^{-\frac{1}{6}}s)=:\tilde{r}_{n}(s),
$$
$$
\sigma_{n}(t_{1})=\sigma_{n}(\sqrt{2n}+2^{-\frac{1}{2}}n^{-\frac{1}{6}}s)=:\tilde{\sigma}_{n}(s),
$$
(\ref{sod}) becomes
\be\label{sod1}
\tilde{R}_{n}''(s)=\frac{\left(\tilde{R}_{n}'(s)\right)^2}{2\tilde{R}_{n}(s)}+\frac{3}{4n^{\frac{1}{3}}}\tilde{R}_{n}^3(s)
-2^{\frac{1}{2}}\left(2n^{\frac{1}{6}}+n^{-\frac{1}{2}}s\right)\tilde{R}_{n}^2(s)+\left(\frac{s^2}{2n^{\frac{2}{3}}}+2s-n^{-\frac{1}{3}}\right)\tilde{R}_{n}(s).
\ee
Let $\check{R}_{n}(s)$ satisfy the quadratic equation by neglecting the derivative terms in (\ref{sod1}),
$$
3\check{R}_{n}^2(s)-4\sqrt{2}\left(n^{-\frac{1}{6}}s+2n^{\frac{1}{2}}\right)\check{R}_{n}(s)
+2n^{-\frac{1}{3}}s^2+8 n^{\frac{1}{3}}s=4.
$$
The solution is
$$
\check{R}_{n}(s)=\frac{1}{3}\left(4\sqrt{2n}+2 \sqrt{2}n^{-\frac{1}{6}} s\pm\sqrt{12+32 n+2n^{-\frac{1}{3}} s^2+8 n^{\frac{1}{3}} s}\right).
$$
As $n\rightarrow\infty,\; t_{1}\rightarrow\infty$, then $w(x,t_{1})\rightarrow A \mathrm{e}^{-x^2}$. It makes $\alpha_{n}(t_{1})\rightarrow 0$ since $w(x,t_{1})$ tends to an even weight function \cite{Chihara}. In view of $R_{n}(t_{1})=2\alpha_{n}(t_{1})$, we see that $\tilde{R}_{n}(s)\rightarrow 0$ as $n\rightarrow\infty$. So we should choose
$$
\check{R}_{n}(s)=\frac{1}{3}\left(4\sqrt{2n}+2 \sqrt{2}n^{-\frac{1}{6}} s-\sqrt{12+32 n+2n^{-\frac{1}{3}} s^2+8 n^{\frac{1}{3}} s}\right).
$$
It follows that as $n\rightarrow\infty$,
$$
\check{R}_{n}(s)=\frac{s}{\sqrt{2}n^{\frac{1}{6}}}-\frac{1}{2\sqrt{2n}}-\frac{s^2}{16\sqrt{2}n^{\frac{5}{6}}}
+O\left(\frac{1}{n^{\frac{7}{6}}}\right).
$$
Hence we suppose that
\be\label{asy}
\tilde{R}_{n}(s)=n^{-\frac{1}{6}}v_{1}(s)+n^{-\frac{1}{2}}v_{2}(s)+n^{-\frac{5}{6}}v_{3}(s)+O(n^{-\frac{7}{6}}).
\ee
Substituting (\ref{asy}) into (\ref{sod1}), we find as $n\rightarrow\infty$,
\be\label{us}
v_{1}''(s)-\frac{(v_{1}'(s))^2}{2 v_{1}(s)}+2 \sqrt{2}\: v_{1}^2(s)-2 s\: v_{1}(s)=0,
\ee
\be\label{vs}
v_{2}''(s)-\frac{v_{1}'(s) v_{2}'(s)}{v_{1}(s)}+\left(\frac{(v_{1}'(s))^2}{2 v_{1}^2(s)}+4 \sqrt{2}\: v_{1}(s)-2s\right)v_{2}(s)+v_{1}(s)=0,
\ee
and
\bea\label{ws}
&&v_{3}''(s)-\frac{v_{1}'(s) v_{3}'(s)}{v_{1}(s)}+\left(\frac{1}{2v_{1}^2(s)}+4 \sqrt{2}\: v_{1}(s)-2s\right)v_{3}(s)-\frac{(v_{2}'(s))^2}{2v_{1}(s)}+\frac{ v_{1}'(s)v_{2}(s) v_{2}'(s)}{v_{1}^2(s)}\nonumber\\[10pt]
&-&\frac{(v_{1}'(s))^2 v_{2}^2(s)}{2v_{1}^3(s)}-\frac{1}{2} s^2\: v_{1}(s)+\sqrt{2} s\: v_{1}^2(s)-\frac{3}{4} v_{1}^3(s)+v_{2}(s)+2 \sqrt{2}\: v_{2}^2(s)=0.
\eea
We obtain the large $s$ asymptotic of $v_{1}(s)$ from (\ref{us}). As $s\rightarrow\infty$,
\be\label{us1}
v_{1}(s)=\frac{s}{\sqrt{2}}+\frac{1}{4 \sqrt{2} s^2}-\frac{9}{8 \sqrt{2} s^5}+\frac{1323}{64 \sqrt{2} s^8}-\frac{108315}{128 \sqrt{2} s^{11}}+o(s^{-11}).
\ee
Substituting (\ref{us1}) into (\ref{vs}), we have the large $s$ behavior of $v_{2}(s)$,
\be\label{vs1}
v_{2}(s)=-\frac{1}{2 \sqrt{2}}+\frac{1}{4 \sqrt{2} s^3}-\frac{45}{16 \sqrt{2} s^6}+\frac{1323}{16 \sqrt{2} s^9}-\frac{1191465}{256 \sqrt{2} s^{12}}+o(s^{-12}).
\ee
Then substituting (\ref{us1}) and (\ref{vs1}) into (\ref{ws}), we obtain the large $s$ asymptotic of $v_{3}(s)$,
\be\label{ws1}
v_{3}(s)=-\frac{s^2}{16 \sqrt{2}}+\frac{73}{256 \sqrt{2} s^4}-\frac{1791}{256 \sqrt{2} s^7}+\frac{686745}{2048 \sqrt{2} s^{10}}-\frac{383291217}{16384 \sqrt{2} s^{13}}+o(s^{-13}).
\ee
In addition, letting $v_{1}(s)=-\sqrt{2}\:\hat{v}(s)$ and substituting into (\ref{us}), we readily see that $\hat{v}(s)$ satisfies the Painlev\'{e} XXXIV equation (\ref{p34}).

Now we consider the large $n$ behavior of $\tilde{r}_{n}(s)$ and $\tilde{\sigma}_{n}(s)$. We find from (\ref{rn}) that
$$
\tilde{r}_{n}(s)=\frac{1}{4}\left[2^{\frac{1}{2}}n^{\frac{1}{6}}\tilde{R}_{n}'(s)-\tilde{R}_{n}^2(s)+2(\sqrt{2n}
+2^{-\frac{1}{2}}n^{-\frac{1}{6}}s)\tilde{R}_{n}(s)\right].
$$
Substituting (\ref{asy}) into the above, we arrive at (\ref{r1}).\\
From (\ref{eq1}) we have
$$
\sigma_{n}(t_{1})=2t_{1}r_{n}(t_{1})-(n+r_{n}(t_{1}))R_{n}(t_{1})-\frac{2r_{n}^2(t_{1})}{R_{n}(t_{1})}.
$$
Replacing $r_{n}(t_{1})$ with the expression of $R_{n}(t_{1})$ from (\ref{rn}), we find
$$
\sigma_{n}(t_{1})=\frac{R_{n}^4(t_{1})-4 t_{1} R_{n}^3(t_{1})+\left(4 t_{1}^2-8 n\right) R_{n}^2(t_{1})-(R_{n}'(t_{1}))^2}{8 R_{n}(t_{1})}.
$$
It follows that
$$
\tilde{\sigma}_{n}(s)=\frac{1}{8}\tilde{R}_{n}^3(s)-\frac{2n^{\frac{1}{2}}+n^{-\frac{1}{6}}s}{2\sqrt{2}}\tilde{R}_{n}^2(s)+s \left( n^{\frac{1}{3}}+4^{-1}n^{-\frac{1}{3}}s\right) \tilde{R}_{n}(s)-\frac{n^{\frac{1}{3}} (\tilde{R}_{n}'(s))^2}{4 \tilde{R}_{n}(s)}.
$$
Using (\ref{asy}), we obtain (\ref{sigma1}). This completes the proof.
\end{proof}
\noindent $\mathbf{Remark.}$ The results of Theorem \ref{cor} coincide with \cite{Its}. See also \cite{Bogatskiy, Xu} on the asymptotics of the recurrence coefficients $\alpha_{n}, \beta_{n}$ and the Painlev\'{e} XXXIV equation. In addition, Perret and Schehr \cite{Perret} also obtained the Painlev\'{e} XXXIV in the study of the gap probability distribution between the first two largest eigenvalues in the Gaussian unitary ensemble.

\begin{proposition}
Assume that $n\rightarrow\infty,\; t_{1}=\sqrt{2n}+2^{-\frac{1}{2}}n^{-\frac{1}{6}}s$ and $s$ is fixed. Then as $n\rightarrow\infty$,  $\tilde{P}_{n}(z):=P_{n}(z,\sqrt{2n}+2^{-\frac{1}{2}}n^{-\frac{1}{6}}s)$ satisfies the Hermite's differential equation
$$
\tilde{P}_{n}''(z)-2z\: \tilde{P}_{n}'(z)+2n\: \tilde{P}_{n}(z)=0.
$$
\end{proposition}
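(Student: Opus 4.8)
The plan is to run the double-scaling limit directly on the exact linear ODE~(\ref{ode1}), whose coefficients — after $r_n(t_1)$ has been eliminated via~(\ref{rn}) — depend only on $R_n(t_1)$ and $R_n'(t_1)$. First I would record the scaled data. With $t_1=\sqrt{2n}+2^{-1/2}n^{-1/6}s$ one has $\frac{d}{dt_1}=\sqrt{2}\,n^{1/6}\frac{d}{ds}$, so by Theorem~\ref{cor} (i.e. the expansion~(\ref{asy})) $R_n(t_1)=n^{-1/6}v_1(s)+O(n^{-1/2})$ and $R_n'(t_1)=\sqrt{2}\,n^{1/6}\tilde R_n'(s)=\sqrt{2}\,v_1'(s)+O(n^{-1/3})$. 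Moreover $t_1^2=2n+2n^{1/3}s+\tfrac12 n^{-1/3}s^2$, hence $8n-4t_1^2=-8n^{1/3}s-2n^{-1/3}s^2$, and for $z$ in any fixed compact subset of $\mathbb{C}$ one has $z-t_1=-\sqrt{2n}\,(1+O(n^{-1/2}))$ and $2z-2t_1+R_n(t_1)=-2\sqrt{2n}\,(1+O(n^{-1/2}))$.

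Next I would estimate the two non-trivial coefficients of~(\ref{ode1}) termwise. For the coefficient of $P_n'(z)$, the rational piece $\frac{R_n(t_1)}{(z-t_1)(2z-2t_1+R_n(t_1))}$ is $O(n^{-1/6}/n)=O(n^{-7/6})$, so this coefficient converges to $-2z$. For the coefficient of $P_n(z)$ the leading term is $2n$, and each of the three remaining pieces vanishes: the term $\frac{R_n'(t_1)-R_n^2(t_1)+2t_1R_n(t_1)}{4(z-t_1)^2}$ has numerator $2\sqrt{2}\,n^{1/3}v_1(s)+O(1)$ and denominator $\sim 8n$, hence is $O(n^{-2/3})$; the term carrying the extra factor $\frac{R_n(t_1)}{2z-2t_1+R_n(t_1)}=O(n^{-1/6}/\sqrt{n})$ is smaller still; and in $\frac{(R_n'(t_1))^2-R_n^4(t_1)+4t_1R_n^3(t_1)+(8n-4t_1^2)R_n^2(t_1)}{8(z-t_1)R_n(t_1)}$ the numerator equals $2(v_1'(s))^2+4\sqrt{2}\,v_1^3(s)-8s\,v_1^2(s)+O(n^{-1/3})=O(1)$, while the denominator $8(z-t_1)R_n(t_1)=-8\sqrt{2}\,n^{1/3}v_1(s)(1+o(1))$ grows like $n^{1/3}$, so this term is $O(n^{-1/3})$.

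Collecting the estimates, (\ref{ode1}) reads $P_n''(z)+(-2z+o(1))P_n'(z)+(2n+o(1))P_n(z)=0$, uniformly for $z$ in compact subsets of $\mathbb{C}$ (the sole singularity $z=t_1$ escaping to infinity), so in the limit $\tilde P_n(z)$ solves $\tilde P_n''(z)-2z\tilde P_n'(z)+2n\tilde P_n(z)=0$, i.e. Hermite's equation; this is consistent with $w(x,t_1)\to A\,\mathrm{e}^{-x^2}$ as $t_1\to\infty$, whose monic orthogonal polynomials satisfy exactly this equation. \textbf{The main obstacle} is the last term above: one must verify that its numerator is genuinely $O(1)$ rather than, say, $O(n^{1/6})$, which requires carrying the expansion~(\ref{asy}) to the stated order and using the equation~(\ref{us}) for $v_1$ when simplifying the $O(1)$ combination $2(v_1')^2+4\sqrt{2}\,v_1^3-8s\,v_1^2$; once that order-bookkeeping is in place the remainder is routine, the real content being the precise, $z$-uniform assertion that the $2n$ survives while every other correction dies.
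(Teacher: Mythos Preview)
Your proposal is correct and follows essentially the same route as the paper: substitute the expansion~(\ref{asy}) into the exact ODE~(\ref{ode1}) and track the order of each piece, finding that the coefficient of $P_n'$ is $-2z+O(n^{-7/6})$ and the coefficient of $P_n$ is $2n+O(n^{-1/3})$. One small remark: your ``main obstacle'' is milder than you suggest, since each summand in the numerator $(R_n')^2-R_n^4+4t_1R_n^3+(8n-4t_1^2)R_n^2$ is already $O(1)$ from the leading-order expansions alone (no hidden $n^{1/6}$ can appear), so equation~(\ref{us}) is not actually needed---the paper simply records the surviving $O(n^{-1/3})$ combination without simplifying it further.
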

\begin{proof}
By changing variable $t_{1}$ to $s$, (\ref{ode1}) becomes
\bea
&&\tilde{P}_n''(z)+\tilde{P}_n'(z) \left(\frac{\tilde{R}_n(s)}{(z-\tilde{s}) (2z-2 \tilde{s}+\tilde{R}_n(s)}-2z\right)+\tilde{P}_n(z)\bigg(2 n-\frac{2^{\frac{1}{2}}n^{\frac{1}{6}}\tilde{R}_n'(s)-\tilde{R}_n^2(s)+2 \tilde{s} \tilde{R}_n(s)}{4 (z-\tilde{s})^2}\nonumber\\[10pt]
&+&\frac{\tilde{R}_n(s) \left(2^{\frac{1}{2}}n^{\frac{1}{6}}\tilde{R}_n'(s)-\tilde{R}_n^2(s)+2 \tilde{s} \tilde{R}_n(s)\right)}{4 (z-\tilde{s})^2 (2z-2 \tilde{s}+\tilde{R}_n(s))}+\frac{2n^{\frac{1}{3}}(\tilde{R}_n'(s))^2-\tilde{R}_n^4(s)+4 \tilde{s} \tilde{R}_n^3(s)+(8 n-4 \tilde{s}^2) \tilde{R}_n^2(s)}{8(z-\tilde{s}) \tilde{R}_n(s)}\bigg)=0,\nonumber
\eea
where $\tilde{R}_{n}(s)=R_{n}(\sqrt{2n}+2^{-\frac{1}{2}}n^{-\frac{1}{6}}s)$ and $\tilde{s}:=\sqrt{2n}+2^{-\frac{1}{2}}n^{-\frac{1}{6}}s$.
\\
Substituting (\ref{asy}) into the above, we find as $n\rightarrow\infty$, the coefficient of $\tilde{P}_n'(z)$ is
$$
-2z+\frac{v_{1}(s)}{4n^{\frac{7}{6}}}+O(n^{-\frac{3}{2}})
$$
and the coefficient of $\tilde{P}_n(z)$ is
$$
2n-\frac{(v_{1}'(s))^2-4 s\: v_{1}^2(s)+2 \sqrt{2} v_{1}^3(s)}{4 \sqrt{2}n^{\frac{1}{3}} v_{1}(s)}+O(n^{-\frac{2}{3}}).
$$
It follows that as $n\rightarrow\infty$,
$$
\tilde{P}_{n}''(z)-2z\: \tilde{P}_{n}'(z)+2n\: \tilde{P}_{n}(z)=0.
$$
\end{proof}

\section{Gaussian Weight with Two Jumps}
In this section, we suppose $B_{1}\neq0$ and $B_{2}\neq0$, which corresponds to the Gaussian weight with two jumps
$$
w(x,t_{1},t_{2}):=\mathrm{e}^{-x^2}(A+B_{1}\theta(x-t_{1})+B_{2}\theta(x-t_{2})),\;\;t_{1}<t_{2}.
$$
This is an extension of \cite{Chen2005} which considered the Gaussian weight with a single jump. For example,
$$
w(x,t_{1},t_{2})=\mathrm{e}^{-x^2}\cdot\left\{
\begin{aligned}
&\mathrm{e}^{\mu},&x<t_{1};\\
&1,&t_{1}<x<t_{2};\\
&\mathrm{e}^{-\mu},&x>t_{2},
\end{aligned}
\right.
$$
corresponds to $A=\mathrm{e}^{\mu},\;B_{1}=1-\mathrm{e}^{\mu},\;B_{2}=\mathrm{e}^{-\mu}-1$.

\subsection{Ladder Operators}
Let $P_{n}(x,t_{1},t_{2})$ be the monic polynomials of degree $n$ orthogonal with respect to the weight function $w(x,t_{1},t_{2})$,
\be\label{ort}
\int_{-\infty}^{\infty}P_{j}(x,t_{1},t_{2})P_{k}(x,t_{1},t_{2})w(x,t_{1},t_{2})dx=h_{j}(t_{1},t_{2})\delta_{jk},\;\;j,k=0,1,2,\ldots,
\ee
where
$$
w(x,t_{1},t_{2}):=w_{0}(x)(A_{1}+B_{1}\theta(x-t_{1})+B_{2}\theta(x-t_{2})),\;\;w_{0}(x):=\mathrm{e}^{-\mathrm{v}_{0}(x)},\;\;\mathrm{v}_{0}(x)=x^2.
$$
Similarly as the previous section, the monic polynomials $P_{n}(x,t_{1},t_{2})$ can be written in the form
\be\label{expant}
P_{n}(x,t_{1},t_{2})=x^{n}+\mathrm{p}(n,t_{1},t_{2})x^{n-1}+\cdots+P_{n}(0,t_{1},t_{2}),
\ee
and the recurrence relation reads
$$
xP_{n}(x,t_{1},t_{2})=P_{n+1}(x,t_{1},t_{2})+\alpha_{n}(t_{1},t_{2})P_{n}(x,t_{1},t_{2})+\beta_{n}(t_{1},t_{2})P_{n-1}(x,t_{1},t_{2})
$$
with the initial conditions
$$
P_{0}(x,t_{1},t_{2})=1,\;\;\beta_{0}(t_{1},t_{2})P_{-1}(x,t_{1},t_{2})=0.
$$
We also have the expressions of $\alpha_{n}(t_{1},t_{2})$ and $\beta_{n}(t_{1},t_{2})$:
\be\label{alpt}
\alpha_{n}(t_{1},t_{2})=\mathrm{p}(n,t_{1},t_{2})-\mathrm{p}(n+1,t_{1},t_{2}),
\ee
$$
\beta_{n}(t_{1},t_{2})=\frac{h_{n}(t_{1},t_{2})}{h_{n-1}(t_{1},t_{2})}.
$$
A telescopic sum of (\ref{alpt}) gives
$$
\sum_{j=0}^{n-1}\alpha_{j}(t_{1},t_{2})=-\mathrm{p}(n,t_{1},t_{2}).
$$
We also denote
$$
D_{n}(t_{1},t_{2}):=D_{n}[w]=\frac{1}{n!}\int_{(-\infty,\infty)^{n}}\prod_{1\leq j<k\leq n}\left(x_{k}-x_{j}\right)^{2}
\prod_{j=1}^{n}w(x_{j},t_{1},t_{2})dx_{j}
$$
and we have
\be\label{dn}
D_{n}(t_{1},t_{2})=\det\left(\int_{-\infty}^{\infty}x^{i+j}w(x,t_{1},t_{2})dx\right)_{i,j=0}^{n-1}=\prod_{j=0}^{n-1}h_{j}(t_{1},t_{2}).
\ee
To simplify notations, we suppress the $t_{1}, t_{2}$ dependence in $P_{n}(x)$, $h_{n}$, $\alpha_{n}$ and $\beta_{n}$ in the following discussions.
From Lemma 1 and Remark 2 in \cite{Basor2009}, we have the following theorem.
\begin{theorem}
The lowering and raising operators for monic polynomials orthogonal with respect to $w(x,t_{1},t_{2})$ are
$$
P_{n}'(z)=\beta_{n}A_{n}(z)P_{n-1}(z)-B_{n}(z)P_{n}(z),
$$
$$
P_{n-1}'(z)=(B_{n}(z)+\mathrm{v}_{0}'(z))P_{n-1}(z)-A_{n-1}(z)P_{n}(z),
$$
where
\be\label{anz2}
A_{n}(z)=2+\frac{R_{n,1}(t_{1},t_{2})}{z-t_{1}}+\frac{R_{n,2}(t_{1},t_{2})}{z-t_{2}},
\ee
\be\label{bnz2}
B_{n}(z)=\frac{r_{n,1}(t_{1},t_{2})}{z-t_{1}}+\frac{r_{n,2}(t_{1},t_{2})}{z-t_{2}},
\ee
and
$$
R_{n,1}(t_{1},t_{2})=\frac{B_{1}P_{n}^2(t_{1})\mathrm{e}^{-t_{1}^2}}{h_{n}},\;\;\;\;R_{n,2}(t_{1},t_{2})=\frac{B_{2}P_{n}^2(t_{2})\mathrm{e}^{-t_{1}^2}}{h_{n}},
$$
$$
r_{n,1}(t_{1},t_{2})=\frac{B_{1}P_{n}(t_{1})P_{n-1}(t_{1})\mathrm{e}^{-t_{1}^2}}{h_{n-1}},\;\;\;\;
r_{n,2}(t_{1},t_{2})=\frac{B_{2}P_{n}(t_{2})P_{n-1}(t_{2})\mathrm{e}^{-t_{2}^2}}{h_{n-1}}.
$$
Here $P_{n}(t_{1})=P_{n}(x,t_{1},t_{2})|_{x=t_{1}},\;P_{n}(t_{2})=P_{n}(x,t_{1},t_{2})|_{x=t_{2}}$.
\end{theorem}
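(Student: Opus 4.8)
The plan is to derive this statement exactly as Lemma~1 of \cite{Basor2009} is derived, the only genuinely new feature being the second point of discontinuity at $t_2$, which enters in a way completely parallel to the first. First I would expand the derivative in the orthogonal basis: write $P_n'(z)=\sum_{k=0}^{n-1}c_{n,k}P_k(z)$ with $c_{n,k}=h_k^{-1}\int_{-\infty}^{\infty}P_n'(y)P_k(y)\,w(y,t_1,t_2)\,dy$, and integrate by parts separately on the three intervals $(-\infty,t_1)$, $(t_1,t_2)$, $(t_2,\infty)$, on each of which $w(\cdot,t_1,t_2)$ is a constant multiple of the smooth, rapidly decreasing function $w_0(x)=\mathrm{e}^{-x^2}$. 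The three interior contributions recombine into $-\int_{-\infty}^{\infty}P_nP_k'\,w\,dy+\int_{-\infty}^{\infty}P_nP_k\,\mathrm{v}_0'\,w\,dy$; the first integral vanishes since $\deg P_k'<n$, and because $\mathrm{v}_0'(y)=2y$ the second reduces, via the recurrence (\ref{rr}), to $2h_n\delta_{k,n-1}$.

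Next I would assemble the boundary terms produced at the two jumps. At $t_1$ the upper endpoint of $(-\infty,t_1)$ and the lower endpoint of $(t_1,t_2)$ leave $P_n(t_1)P_k(t_1)\bigl[w(t_1^-,t_1,t_2)-w(t_1^+,t_1,t_2)\bigr]=-B_1P_n(t_1)P_k(t_1)\,\mathrm{e}^{-t_1^2}$, and likewise $t_2$ leaves $-B_2P_n(t_2)P_k(t_2)\,\mathrm{e}^{-t_2^2}$; this is precisely the effect that differentiating $\theta(x-t_1)$ and $\theta(x-t_2)$ would produce. Collecting everything gives
\[
c_{n,k}=\frac{2h_n}{h_{n-1}}\,\delta_{k,n-1}-\frac{B_1P_n(t_1)\mathrm{e}^{-t_1^2}}{h_k}\,P_k(t_1)-\frac{B_2P_n(t_2)\mathrm{e}^{-t_2^2}}{h_k}\,P_k(t_2).
\]
Substituting this into $P_n'(z)=\sum_{k=0}^{n-1}c_{n,k}P_k(z)$ and applying the Christoffel--Darboux identity $\sum_{k=0}^{n-1}h_k^{-1}P_k(u)P_k(z)=h_{n-1}^{-1}\,[P_n(z)P_{n-1}(u)-P_n(u)P_{n-1}(z)]/(z-u)$ at $u=t_1$ and $u=t_2$ turns the two sums into rational functions whose only singularities are simple poles at $t_1$ and $t_2$. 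Regrouping the result as a multiple of $P_{n-1}(z)$ plus a multiple of $P_n(z)$ yields the lowering relation with $A_n(z)$ and $B_n(z)$ of the asserted form: the residue of $A_n$ at $t_j$ is $B_jP_n^2(t_j)\mathrm{e}^{-t_j^2}/h_n$ (using $h_n=\beta_nh_{n-1}$) and the residue of $B_n$ at $t_j$ is $B_jP_n(t_j)P_{n-1}(t_j)\mathrm{e}^{-t_j^2}/h_{n-1}$, i.e. $R_{n,j}$ and $r_{n,j}$ respectively. The raising relation then follows from the lowering relation together with the three-term recurrence (\ref{rr}) by the usual ladder-operator argument, exactly as in the single-jump case.

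I expect the only delicate point to be the bookkeeping of the two boundary (delta) contributions: one must take the correct one-sided limits $w(t_j^{\mp},t_1,t_2)$ so that the jump magnitudes come out as $B_j\mathrm{e}^{-t_j^2}$ rather than as combinations involving $A$ or $A+B_1$. Everything else is the standard Christoffel--Darboux manipulation already carried out in \cite{Basor2009}; a convenient consistency check is that setting $B_2=0$ removes the pole at $t_2$ and reduces the statement to the single-jump ladder operators of Theorem~2.1.
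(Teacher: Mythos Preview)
Your proposal is correct and follows exactly the route the paper invokes: the paper does not give a self-contained proof but simply cites Lemma~1 and Remark~2 of \cite{Basor2009}, and what you have written is precisely the two-jump instantiation of that argument (piecewise integration by parts, collection of the boundary contributions at $t_1$ and $t_2$, Christoffel--Darboux, then the raising relation from the recurrence). Your computation in fact yields the correct exponential factor $\mathrm{e}^{-t_2^2}$ in $R_{n,2}$, whereas the displayed statement in the paper has the evident typo $\mathrm{e}^{-t_1^2}$.
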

From \cite{Basor2009, Chen2005}, we see that Lemma \ref{s1s2}, \ref{s2p} and \ref{de} are still valid for the weight with two jumps.
Substituting (\ref{anz2}) and (\ref{bnz2}) into ($S_{1}$), we obtain
$$
\frac{r_{n+1,1}(t_{1},t_{2})+r_{n,1}(t_{1},t_{2})}{z-t_{1}}+\frac{r_{n+1,2}(t_{1},t_{2})+r_{n,2}(t_{1},t_{2})}{z-t_{2}}
=\frac{(z-\alpha_{n})R_{n,1}(t_{1},t_{2})}{z-t_{1}}+\frac{(z-\alpha_{n})R_{n,2}(t_{1},t_{2})}{z-t_{2}}-2\alpha_{n}.
$$
It follows that
\be\label{s1e}
R_{n,1}(t_{1},t_{2})+R_{n,2}(t_{1},t_{2})=2\alpha_{n},
\ee
$$
r_{n+1,1}(t_{1},t_{2})+r_{n,1}(t_{1},t_{2})=(t_{1}-\alpha_{n})R_{n,1}(t_{1},t_{2}),
$$
and
$$
r_{n+1,2}(t_{1},t_{2})+r_{n,2}(t_{1},t_{2})=(t_{2}-\alpha_{n})R_{n,2}(t_{1},t_{2}).
$$
Similarly, plugging (\ref{anz2}) and (\ref{bnz2}) into ($S_{2}'$), we obtain
\bea
&&\left(\frac{r_{n,1}(t_{1},t_{2})}{z-t_{1}}+\frac{r_{n,2}(t_{1},t_{2})}{z-t_{2}}\right)^2
+2z\left(\frac{r_{n,1}(t_{1},t_{2})}{z-t_{1}}+\frac{r_{n,2}(t_{1},t_{2})}{z-t_{2}}\right)+\frac{\sum_{j=0}^{n-1}R_{j,1}(t_{1},t_{2})}{z-t_{1}}
+\frac{\sum_{j=0}^{n-1}R_{j,2}(t_{1},t_{2})}{z-t_{2}}\nonumber\\
&+&2n=\beta_{n}\left(2+\frac{R_{n,1}(t_{1},t_{2})}{z-t_{1}}+\frac{R_{n,2}(t_{1},t_{2})}{z-t_{2}}\right)
\left(2+\frac{R_{n-1,1}(t_{1},t_{2})}{z-t_{1}}+\frac{R_{n-1,2}(t_{1},t_{2})}{z-t_{2}}\right).
\eea
It implies the following equalities:
\be\label{s2p1}
\beta_{n}=\frac{n+r_{n,1}(t_{1},t_{2})+r_{n,2}(t_{1},t_{2})}{2},
\ee
\be\label{s2p2}
r_{n,1}^{2}(t_{1},t_{2})=\beta_{n}R_{n,1}(t_{1},t_{2})R_{n-1,1}(t_{1},t_{2}),
\ee
\be\label{s2p3}
r_{n,2}^{2}(t_{1},t_{2})=\beta_{n}R_{n,2}(t_{1},t_{2})R_{n-1,2}(t_{1},t_{2}),
\ee
\bea\label{s2p4}
&&\frac{2r_{n,1}(t_{1},t_{2})r_{n,2}(t_{1},t_{2})}{t_{1}-t_{2}}+2t_{1}r_{n,1}(t_{1},t_{2})+\sum_{j=0}^{n-1}R_{j,1}(t_{1},t_{2})\nonumber\\
&=&\beta_{n}\left(\frac{R_{n,1}(t_{1},t_{2})R_{n-1,2}(t_{1},t_{2})+R_{n,2}(t_{1},t_{2})R_{n-1,1}(t_{1},t_{2})}{t_{1}-t_{2}}
+2R_{n,1}(t_{1},t_{2})+2R_{n-1,1}(t_{1},t_{2})\right),\nonumber\\
\eea
\bea\label{s2p5}
&&\frac{2r_{n,1}(t_{1},t_{2})r_{n,2}(t_{1},t_{2})}{t_{2}-t_{1}}+2t_{2}r_{n,2}(t_{1},t_{2})+\sum_{j=0}^{n-1}R_{j,2}(t_{1},t_{2})\nonumber\\
&=&\beta_{n}\left(\frac{R_{n,1}(t_{1},t_{2})R_{n-1,2}(t_{1},t_{2})+R_{n,2}(t_{1},t_{2})R_{n-1,1}(t_{1},t_{2})}{t_{2}-t_{1}}
+2R_{n,2}(t_{1},t_{2})+2R_{n-1,2}(t_{1},t_{2})\right).\nonumber\\
\eea
The sum of (\ref{s2p4}) and (\ref{s2p5}) gives
\bea\label{s2p6}
&&2t_{1}r_{n,1}(t_{1},t_{2})+2t_{2}r_{n,2}(t_{1},t_{2})+\sum_{j=0}^{n-1}R_{j,1}(t_{1},t_{2})+\sum_{j=0}^{n-1}R_{j,2}(t_{1},t_{2})\nonumber\\
&=&2\beta_{n}\left(R_{n,1}(t_{1},t_{2})+R_{n,2}(t_{1},t_{2})+R_{n-1,1}(t_{1},t_{2})+R_{n-1,2}(t_{1},t_{2})\right).
\eea

\subsection{Toda Evolution in $t_{1}$ and $t_{2}$}
Taking a derivative with respect to $t_{1}$ and $t_{2}$ in the equation
$$
\int_{-\infty}^{\infty}P_{n}^2(x,t_{1},t_{2})\mathrm{e}^{-x^2}(A_{1}+B_{1}\theta(x-t_{1})+B_{2}\theta(x-t_{2}))dx=h_{n}(t_{1},t_{2}),\;\;n=0,1,2,\ldots,
$$
respectively, we obtain
$$
\partial_{t_{1}}h_{n}(t_{1},t_{2})=-B_{1}\:P_{n}^{2}(t_{1})\mathrm{e}^{-t_{1}^2}
$$
and
$$
\partial_{t_{2}}h_{n}(t_{1},t_{2})=-B_{2}\:P_{n}^{2}(t_{2})\mathrm{e}^{-t_{2}^2}.
$$
It follows that
\be\label{h1}
\partial_{t_{1}}\ln h_{n}(t_{1},t_{2})=-R_{n,1}(t_{1},t_{2}),
\ee
\be\label{h2}
\partial_{t_{2}}\ln h_{n}(t_{1},t_{2})=-R_{n,2}(t_{1},t_{2}),
\ee
and
$$
\partial_{t_{1}}[\ln\beta_{n}(t_{1},t_{2})]=\partial_{t_{1}}\ln h_{n}(t_{1},t_{2})-\partial_{t_{1}}\ln h_{n-1}(t_{1},t_{2})=R_{n-1,1}(t_{1},t_{2})-R_{n,1}(t_{1},t_{2}),
$$
$$
\partial_{t_{2}}[\ln\beta_{n}(t_{1},t_{2})]=\partial_{t_{2}}\ln h_{n}(t_{1},t_{2})-\partial_{t_{2}}\ln h_{n-1}(t_{1},t_{2})=R_{n-1,2}(t_{1},t_{2})-R_{n,2}(t_{1},t_{2}).
$$
Hence,
\be\label{eqr1}
\partial_{t_{1}}\beta_{n}(t_{1},t_{2})=\beta_{n}R_{n-1,1}(t_{1},t_{2})-\beta_{n}R_{n,1}(t_{1},t_{2})
=\frac{r_{n,1}^2(t_{1},t_{2})}{R_{n,1}(t_{1},t_{2})}-\beta_{n}R_{n,1}(t_{1},t_{2}),
\ee
\be\label{eqr2}
\partial_{t_{2}}\beta_{n}(t_{1},t_{2})=\beta_{n}R_{n-1,2}(t_{1},t_{2})-\beta_{n}R_{n,2}(t_{1},t_{2})
=\frac{r_{n,2}^2(t_{1},t_{2})}{R_{n,2}(t_{1},t_{2})}-\beta_{n}R_{n,2}(t_{1},t_{2}).
\ee
On the other hand, differentiating with respect to $t_{1}$ and $t_{2}$ in the equation
$$
\int_{-\infty}^{\infty}P_{n}(x,t_{1},t_{2})P_{n-1}(x,t_{1},t_{2})\mathrm{e}^{-x^2}(A_{1}+B_{1}\theta(x-t_{1})+B_{2}\theta(x-t_{2}))dx=0,\;\;n=0,1,2,\ldots,
$$
respectively gives
\be\label{dp1}
\partial_{t_{1}}\mathrm{p}(n,t_{1},t_{2})=r_{n,1}(t_{1},t_{2}),
\ee
and
\be\label{dp2}
\partial_{t_{2}}\mathrm{p}(n,t_{1},t_{2})=r_{n,2}(t_{1},t_{2}).
\ee
Now we have the two variables' Toda equations on $\alpha_{n}$ and $\beta_{n}$.
\begin{proposition}
\be\label{toda11}
\partial_{t_{1}}\beta_{n}+\partial_{t_{2}}\beta_{n}=2\beta_{n}(\alpha_{n-1}-\alpha_{n}),
\ee
\be\label{toda21}
\partial_{t_{1}}\alpha_{n}+\partial_{t_{2}}\alpha_{n}=2(\beta_{n}-\beta_{n+1})+1.
\ee
\end{proposition}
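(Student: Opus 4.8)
The plan is to reproduce the argument that gave the single-jump Toda equations (\ref{toda1})--(\ref{toda2}), now carried out for the \emph{sum} of the two independent flows $\partial_{t_1}$ and $\partial_{t_2}$. First I would prove (\ref{toda11}). Adding the two evolution identities (\ref{eqr1}) and (\ref{eqr2}) for $\partial_{t_1}\beta_n$ and $\partial_{t_2}\beta_n$ — using only the first equality in each, written purely in terms of the $R$'s — gives $\partial_{t_1}\beta_n+\partial_{t_2}\beta_n=\beta_n\bigl(R_{n-1,1}(t_1,t_2)+R_{n-1,2}(t_1,t_2)\bigr)-\beta_n\bigl(R_{n,1}(t_1,t_2)+R_{n,2}(t_1,t_2)\bigr)$. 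The key input is then (\ref{s1e}), namely $R_{n,1}(t_1,t_2)+R_{n,2}(t_1,t_2)=2\alpha_n$, applied at index $n$ and at index $n-1$; this collapses the right-hand side to $2\beta_n\alpha_{n-1}-2\beta_n\alpha_n=2\beta_n(\alpha_{n-1}-\alpha_n)$, which is exactly (\ref{toda11}).

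Next I would prove (\ref{toda21}) starting from the monomial-coefficient formula (\ref{alpt}), $\alpha_n=\mathrm{p}(n,t_1,t_2)-\mathrm{p}(n+1,t_1,t_2)$, and differentiating it in $t_1$ and in $t_2$ separately. The derivatives of $\mathrm{p}$ are supplied by (\ref{dp1}) and (\ref{dp2}), so $\partial_{t_1}\alpha_n=r_{n,1}(t_1,t_2)-r_{n+1,1}(t_1,t_2)$ and $\partial_{t_2}\alpha_n=r_{n,2}(t_1,t_2)-r_{n+1,2}(t_1,t_2)$. Adding these and invoking (\ref{s2p1}), i.e. $r_{n,1}(t_1,t_2)+r_{n,2}(t_1,t_2)=2\beta_n-n$, at indices $n$ and $n+1$ yields $\partial_{t_1}\alpha_n+\partial_{t_2}\alpha_n=(2\beta_n-n)-(2\beta_{n+1}-n-1)=2(\beta_n-\beta_{n+1})+1$, which is (\ref{toda21}).

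There is essentially no obstacle: every ingredient has already been established in the excerpt — the derivative formulas (\ref{eqr1})--(\ref{eqr2}) for $\beta_n$, the differentiation rules (\ref{dp1})--(\ref{dp2}) for $\mathrm{p}(n,t_1,t_2)$, and the consequences (\ref{s1e}) and (\ref{s2p1}) of $(S_1)$ and $(S_2')$. The only mild point to keep straight is bookkeeping of the shifted indices ($n-1$ versus $n$ in the $\beta$ equation, $n$ versus $n+1$ in the $\alpha$ equation) and the sign convention in (\ref{s2p1}); neither the reformulation $\beta_n R_{n-1,i}=r_{n,i}^2/R_{n,i}$ from (\ref{s2p2})--(\ref{s2p3}) nor the coupled equations (\ref{s2p4})--(\ref{s2p6}) are needed for this particular statement. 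So the proof is two short lines, exactly paralleling the single-jump case.
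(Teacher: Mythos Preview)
Your proposal is correct and follows essentially the same route as the paper: for (\ref{toda11}) the paper also adds (\ref{eqr1}) and (\ref{eqr2}) and then invokes (\ref{s1e}), and for (\ref{toda21}) it likewise differentiates (\ref{alpt}) via (\ref{dp1})--(\ref{dp2}) and finishes with (\ref{s2p1}). Your additional remarks about which identities are not needed and the index bookkeeping are accurate.
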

\begin{proof}
The sum of (\ref{eqr1}) and (\ref{eqr2}) gives
$$
\partial_{t_{1}}\beta_{n}(t_{1},t_{2})+\partial_{t_{2}}\beta_{n}(t_{1},t_{2})=\beta_{n}(R_{n-1,1}(t_{1},t_{2})+R_{n-1,2}(t_{1},t_{2})
-R_{n,1}(t_{1},t_{2})-R_{n,2}(t_{1},t_{2})).
$$
Using (\ref{s1e}), we arrive at (\ref{toda11}).
From (\ref{alpt}), (\ref{dp1}) and (\ref{dp2}) we have
$$
\partial_{t_{1}}\alpha_{n}+\partial_{t_{2}}\alpha_{n}=r_{n,1}(t_{1},t_{2})+r_{n,2}(t_{1},t_{2})-r_{n+1,1}(t_{1},t_{2})-r_{n+1,2}(t_{1},t_{2}).
$$
With the aid of (\ref{s2p1}), we readily obtain (\ref{toda21}).
\end{proof}

\subsection{Generalized Jimbo-Miwa-Okamoto $\sigma$ Form of Painlev\'{e} IV}
We define a quantity related to the Hankel determinant $D_{n}(t_{1},t_{2})$,
$$
\sigma_{n}(t_{1},t_{2}):=\partial_{t_{1}}\ln D_{n}(t_{1},t_{2})+\partial_{t_{2}}\ln D_{n}(t_{1},t_{2}).
$$
It is easy from (\ref{dn}), (\ref{h1}) and (\ref{h2}) to see that
$$
\sigma_{n}(t_{1},t_{2})=-\sum_{j=0}^{n-1}R_{j,1}(t_{1},t_{2})-\sum_{j=0}^{n-1}R_{j,2}(t_{1},t_{2}).
$$
Then (\ref{s2p6}) becomes
\be\label{s2p7}
2t_{1}r_{n,1}(t_{1},t_{2})+2t_{2}r_{n,2}(t_{1},t_{2})-\sigma_{n}(t_{1},t_{2})
=4\beta_{n}\left(R_{n,1}(t_{1},t_{2})+R_{n,2}(t_{1},t_{2})\right)+2\partial_{t_{1}}\beta_{n}+2\partial_{t_{2}}\beta_{n},
\ee
where we have made use of (\ref{eqr1}) and (\ref{eqr2}).

From (\ref{s1e}) we have
$$
\sigma_{n}(t_{1},t_{2})=-2\sum_{j=0}^{n-1}\alpha_{j}=2\mathrm{p}(n,t_{1},t_{2}).
$$
It follows from (\ref{dp1}) and (\ref{dp2}) that
\be\label{rn1}
\partial_{t_{1}}\sigma_{n}(t_{1},t_{2})=2r_{n,1}(t_{1},t_{2}),
\ee
\be\label{rn2}
\partial_{t_{2}}\sigma_{n}(t_{1},t_{2})=2r_{n,2}(t_{1},t_{2}).
\ee
Then we see from (\ref{s2p1}) that
\be\label{ben}
\beta_{n}=\frac{2n+\partial_{t_{1}}\sigma_{n}(t_{1},t_{2})+\partial_{t_{2}}\sigma_{n}(t_{1},t_{2})}{4}.
\ee
Substituting (\ref{rn1}), (\ref{ben}) into (\ref{eqr1}) and (\ref{rn2}), (\ref{ben}) into (\ref{eqr2}), we obtain
\be\label{sig1}
\partial_{t_{1}}^2\sigma_{n}(t_{1},t_{2})+\partial_{t_{1}}\partial_{t_{2}}\sigma_{n}(t_{1},t_{2})
=\frac{(\partial_{t_{1}}\sigma_{n}(t_{1},t_{2}))^2}{R_{n,1}(t_{1},t_{2})}-(2n+\partial_{t_{1}}\sigma_{n}(t_{1},t_{2})
+\partial_{t_{2}}\sigma_{n}(t_{1},t_{2}))R_{n,1}(t_{1},t_{2})
\ee
and
\be\label{sig2}
\partial_{t_{2}}^2\sigma_{n}(t_{1},t_{2})+\partial_{t_{2}}\partial_{t_{1}}\sigma_{n}(t_{1},t_{2})
=\frac{(\partial_{t_{2}}\sigma_{n}(t_{1},t_{2}))^2}{R_{n,2}(t_{1},t_{2})}-(2n+\partial_{t_{1}}\sigma_{n}(t_{1},t_{2})
+\partial_{t_{2}}\sigma_{n}(t_{1},t_{2}))R_{n,2}(t_{1},t_{2})
\ee
respectively.
\\
We regard (\ref{sig1}) and (\ref{sig2}) as quadratic equations on $R_{n,1}(t_{1},t_{2})$ and $R_{n,2}(t_{1},t_{2})$, respectively. The solutions are
\be\label{sol1}
R_{n,1}(t_{1},t_{2})=\frac{-\partial_{t_{1}}^2\sigma_{n}-\partial_{t_{1}}\partial_{t_{2}}\sigma_{n}\pm
\sqrt{\Delta_{1}}}{2(2n+\partial_{t_{1}}\sigma_{n}+\partial_{t_{2}}\sigma_{n})}
\ee
and
\be\label{sol2}
R_{n,2}(t_{1},t_{2})=\frac{-\partial_{t_{2}}^2\sigma_{n}-\partial_{t_{2}}\partial_{t_{1}}\sigma_{n}\pm
\sqrt{\Delta_{2}}}{2(2n+\partial_{t_{1}}\sigma_{n}+\partial_{t_{2}}\sigma_{n})},
\ee
where we write $\sigma_{n}(t_{1},t_{2})$ as $\sigma_{n}$ for short, and
$$
\Delta_{1}:=(\partial_{t_{1}}^2\sigma_{n}+\partial_{t_{1}}\partial_{t_{2}}\sigma_{n})^2
+4(\partial_{t_{1}}\sigma_{n})^2(2n+\partial_{t_{1}}\sigma_{n}+\partial_{t_{2}}\sigma_{n}),
$$
$$
\Delta_{2}:=(\partial_{t_{2}}^2\sigma_{n}+\partial_{t_{2}}\partial_{t_{1}}\sigma_{n})^2
+4(\partial_{t_{2}}\sigma_{n})^2(2n+\partial_{t_{1}}\sigma_{n}+\partial_{t_{2}}\sigma_{n}).
$$
Substituting (\ref{rn1}), (\ref{rn2}), (\ref{ben}), (\ref{sol1}) and (\ref{sol2}) into (\ref{s2p7}), we obtain a second order differential equation on $\sigma_{n}$,
$$
(2t_{1}\partial_{t_{1}}\sigma_{n}+2t_{2}\partial_{t_{2}}\sigma_{n}-2\sigma_{n})^2=\Delta_{1}+\Delta_{2}\pm 2\sqrt{\Delta_{1}\Delta_{2}},
$$
which is equivalent to
$$
[(2t_{1}\partial_{t_{1}}\sigma_{n}+2t_{2}\partial_{t_{2}}\sigma_{n}-2\sigma_{n})^2-\Delta_{1}-\Delta_{2}]^2=4\Delta_{1}\Delta_{2}.
$$
We summarize it into the following theorem.
\begin{theorem}
$\sigma_{n}:=\sigma_{n}(t_{1},t_{2})$ satisfies the second order partial differential equation:
\bea\label{equ}
&&[(2t_{1}\partial_{t_{1}}\sigma_{n}+2t_{2}\partial_{t_{2}}\sigma_{n}-2\sigma_{n})^2
-(\partial_{t_{1}}^2\sigma_{n}+\partial_{t_{1}}\partial_{t_{2}}\sigma_{n})^2-(\partial_{t_{2}}^2\sigma_{n}+\partial_{t_{2}}\partial_{t_{1}}\sigma_{n})^2
\nonumber\\
&-&4(\partial_{t_{1}}\sigma_{n})^2(2n+\partial_{t_{1}}\sigma_{n}+\partial_{t_{2}}\sigma_{n})
-4(\partial_{t_{2}}\sigma_{n})^2(2n+\partial_{t_{1}}\sigma_{n}+\partial_{t_{2}}\sigma_{n})]^2\nonumber\\
&=&4[(\partial_{t_{1}}^2\sigma_{n}+\partial_{t_{1}}\partial_{t_{2}}\sigma_{n})^2
+4(\partial_{t_{1}}\sigma_{n})^2(2n+\partial_{t_{1}}\sigma_{n}+\partial_{t_{2}}\sigma_{n})]\nonumber\\
&&[(\partial_{t_{2}}^2\sigma_{n}+\partial_{t_{2}}\partial_{t_{1}}\sigma_{n})^2
+4(\partial_{t_{2}}\sigma_{n})^2(2n+\partial_{t_{1}}\sigma_{n}+\partial_{t_{2}}\sigma_{n})].
\eea
\end{theorem}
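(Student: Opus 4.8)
The plan is to reduce the key supplementary relation (\ref{s2p7}) to a partial differential equation in $\sigma_n:=\sigma_n(t_1,t_2)$ alone, by using the identities already in hand to express every other quantity appearing there in terms of $\sigma_n$ and its first and second derivatives, and then to clear the square roots that inevitably appear by squaring twice. Concretely, (\ref{rn1}) and (\ref{rn2}) give $r_{n,1}=\tfrac12\partial_{t_1}\sigma_n$ and $r_{n,2}=\tfrac12\partial_{t_2}\sigma_n$; (\ref{ben}) gives $4\beta_n=2n+\partial_{t_1}\sigma_n+\partial_{t_2}\sigma_n$, and differentiating it gives $2\partial_{t_1}\beta_n=\tfrac12(\partial_{t_1}^2\sigma_n+\partial_{t_1}\partial_{t_2}\sigma_n)$ and the analogous expression for $2\partial_{t_2}\beta_n$; and (\ref{sol1}), (\ref{sol2}) express $R_{n,1}$ and $R_{n,2}$ through $\sigma_n$, up to a choice of branch of the square root. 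Substituting all of this into (\ref{s2p7}) is, in principle, the whole proof; the only thing to get right is the order of operations so that the radicals are handled cleanly.

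The one useful observation is a cancellation. The denominator in (\ref{sol1}) equals $2(2n+\partial_{t_1}\sigma_n+\partial_{t_2}\sigma_n)=8\beta_n$, so
\[
4\beta_n R_{n,1}=\tfrac12\big(-\partial_{t_1}^2\sigma_n-\partial_{t_1}\partial_{t_2}\sigma_n\pm\sqrt{\Delta_1}\big),
\]
and adding $2\partial_{t_1}\beta_n=\tfrac12(\partial_{t_1}^2\sigma_n+\partial_{t_1}\partial_{t_2}\sigma_n)$ makes the second-derivative terms annihilate, leaving $\tfrac12(\pm\sqrt{\Delta_1})$. The same happens for the $t_2$-terms, so the entire right-hand side of (\ref{s2p7}), namely $4\beta_n(R_{n,1}+R_{n,2})+2\partial_{t_1}\beta_n+2\partial_{t_2}\beta_n$, collapses to $\tfrac12(\pm\sqrt{\Delta_1}\pm\sqrt{\Delta_2})$, while the left-hand side $2t_1 r_{n,1}+2t_2 r_{n,2}-\sigma_n$ becomes $t_1\partial_{t_1}\sigma_n+t_2\partial_{t_2}\sigma_n-\sigma_n$. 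Hence (\ref{s2p7}) is equivalent to
\[
2t_1\partial_{t_1}\sigma_n+2t_2\partial_{t_2}\sigma_n-2\sigma_n=\pm\sqrt{\Delta_1}\pm\sqrt{\Delta_2}.
\]

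It remains to remove the radicals. Squaring once gives $(2t_1\partial_{t_1}\sigma_n+2t_2\partial_{t_2}\sigma_n-2\sigma_n)^2=\Delta_1+\Delta_2\pm2\sqrt{\Delta_1\Delta_2}$, and isolating the surviving $\sqrt{\Delta_1\Delta_2}$ and squaring a second time yields the sign-free identity
\[
\big[(2t_1\partial_{t_1}\sigma_n+2t_2\partial_{t_2}\sigma_n-2\sigma_n)^2-\Delta_1-\Delta_2\big]^2=4\Delta_1\Delta_2 .
\]
Inserting the explicit formulas for $\Delta_1$ and $\Delta_2$ recorded after (\ref{sol2}) then gives precisely (\ref{equ}).

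Since everything is algebraic, the main — and essentially the only — obstacle is bookkeeping of the two square roots: one must keep track of which root-branches of the quadratics (\ref{sig1}), (\ref{sig2}) the genuine $R_{n,1}$, $R_{n,2}$ represent, and observe that the double squaring is exactly what makes (\ref{equ}) independent of these choices, so the theorem holds unconditionally. A routine point to verify along the way is that the divisions involved (by $\beta_n$, by $R_{n,1}$ and $R_{n,2}$, and by $2n+\partial_{t_1}\sigma_n+\partial_{t_2}\sigma_n=4\beta_n$) are legitimate on the relevant parameter range, which follows from the positivity hypotheses $A\ge0$, $A+B_1\ge0$, $A+B_1+B_2\ge0$ guaranteeing $h_n>0$ and $\beta_n>0$.
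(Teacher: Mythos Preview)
Your argument is correct and follows essentially the same route as the paper: express $r_{n,i}$, $\beta_n$, and $R_{n,i}$ in terms of $\sigma_n$ via (\ref{rn1})--(\ref{ben}) and the quadratic solutions (\ref{sol1}), (\ref{sol2}), substitute into (\ref{s2p7}), and square twice to eliminate the radicals. The one difference is that you make explicit the cancellation $4\beta_n R_{n,i}+2\partial_{t_i}\beta_n=\pm\tfrac12\sqrt{\Delta_i}$, which the paper passes over in the phrase ``Substituting \ldots\ into (\ref{s2p7}), we obtain'', but this is the same computation.
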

Actually, (\ref{equ}) is a two variables' generalization of the Jimbo-Miwa-Okamoto $\sigma$ form of the Painlev\'{e} IV.
If $\sigma_{n}$ is independent of $t_{2}$, then (\ref{equ}) is reduced to
\be\label{p41}
(\partial_{t_{1}}^2\sigma_{n})^2=4(t_{1}\partial_{t_{1}}\sigma_{n}-\sigma_{n})^2-4(\partial_{t_{1}}\sigma_{n})^2(2n+\partial_{t_{1}}\sigma_{n}),
\ee
which is the Jimbo-Miwa-Okamoto $\sigma$ form of the Painlev\'{e} IV. Note that (\ref{p41}) is the same with (\ref{jmo}). Similarly, if $\sigma_{n}$ is independent of $t_{1}$, then (\ref{equ}) becomes
$$
(\partial_{t_{2}}^2\sigma_{n})^2=4(t_{2}\partial_{t_{2}}\sigma_{n}-\sigma_{n})^2-4(\partial_{t_{2}}\sigma_{n})^2(2n+\partial_{t_{2}}\sigma_{n}),
$$
which is also the Jimbo-Miwa-Okamoto $\sigma$ form of the Painlev\'{e} IV. The results of this section coincide with \cite{Basor2012}, which is a special case of our problem with $A=0,\;B_{1}=1,\;B_{2}=-1$.

\begin{corollary}
Assume that $n\rightarrow\infty$, $t_1=\sqrt{2n}+2^{-\frac{1}{2}}n^{-\frac{1}{6}}s_1,\; t_2=\sqrt{2n}+2^{-\frac{1}{2}}n^{-\frac{1}{6}}s_2$ and $s_1, s_2$ are fixed. Then as $n\rightarrow\infty$, $\tilde{\sigma}_{n}(s_1,s_2):=\sigma_{n}(\sqrt{2n}+2^{-\frac{1}{2}}n^{-\frac{1}{6}}s_1,\sqrt{2n}+2^{-\frac{1}{2}}n^{-\frac{1}{6}}s_2)$ satisfies the following second order partial differential equation:
$$
\partial_{s_1}\tilde{\sigma}_{n}\left(\partial_{s_2}^{2}\tilde{\sigma}_{n}+\partial_{s_2}\partial_{s_1}\tilde{\sigma}_{n}\right)^{2}+
\partial_{s_2}\tilde{\sigma}_{n}\left(\partial_{s_1}^{2}\tilde{\sigma}_{n}+\partial_{s_1}\partial_{s_2}\tilde{\sigma}_{n}\right)^{2}=
4\partial_{s_1}\tilde{\sigma}_{n}\:\partial_{s_2}\tilde{\sigma}_{n}\left(s_1\partial_{s_1}\tilde{\sigma}_{n}
+s_2\partial_{s_2}\tilde{\sigma}_{n}-\tilde{\sigma}_{n}\right).
$$
\end{corollary}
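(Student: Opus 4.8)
The plan is to push the double scaling through equation~(\ref{equ}) and read off its $n\to\infty$ limit. Under $t_i=\sqrt{2n}+2^{-1/2}n^{-1/6}s_i$ one has $\partial_{t_i}=\sqrt{2}\,n^{1/6}\partial_{s_i}$; write $u:=\tilde{\sigma}_{n}$, $u_i:=\partial_{s_i}u$, $u_{ij}:=\partial_{s_i}\partial_{s_j}u$. First I would pin down the orders of magnitude, in complete parallel with the single-jump edge analysis of Theorem~\ref{cor}: since $\sigma_{n}(t_{1},t_{2})=2\mathrm{p}(n,t_{1},t_{2})$ and $\partial_{t_i}\mathrm{p}(n,t_{1},t_{2})=r_{n,i}$ by (\ref{dp1})--(\ref{dp2}), with $r_{n,i}$ on the $O(n^{1/3})$ scale and $R_{n,i}$ on the $O(n^{-1/6})$ scale when $t_1,t_2$ sit at the soft edge, one gets $\tilde{\sigma}_{n}=O(n^{1/6})$, hence $u_i,u_{ij}=O(n^{1/6})$. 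Consequently the discriminants $\Delta_1,\Delta_2$ from (\ref{sol1})--(\ref{sol2}) satisfy $\Delta_i=8n(\partial_{t_i}\sigma_n)^2\bigl(1+O(n^{-2/3})\bigr)$, i.e. $\sqrt{\Delta_i}=4n^{2/3}|u_i|\bigl(1+O(n^{-2/3})\bigr)$, the $O(n^{-2/3})$ correction being produced both by the $(\partial_{t_i}^2\sigma_n+\partial_{t_i}\partial_{t_j}\sigma_n)^2$ term in $\Delta_i$ and by the $\partial_{t_1}\sigma_n+\partial_{t_2}\sigma_n$ part of the factor $2n+\partial_{t_1}\sigma_n+\partial_{t_2}\sigma_n$.

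Rather than expand the squared identity (\ref{equ}) directly, I would work with the equivalent pre-squared relation $2t_1\partial_{t_1}\sigma_n+2t_2\partial_{t_2}\sigma_n-2\sigma_n=\varepsilon_1\sqrt{\Delta_1}+\varepsilon_2\sqrt{\Delta_2}$, whose square is the relation $(2t_{1}\partial_{t_{1}}\sigma_{n}+2t_{2}\partial_{t_{2}}\sigma_{n}-2\sigma_{n})^2=\Delta_{1}+\Delta_{2}\pm2\sqrt{\Delta_{1}\Delta_{2}}$ derived just above (\ref{equ}); here $\varepsilon_1,\varepsilon_2\in\{+1,-1\}$ are fixed by the signs of $R_{n,1},R_{n,2}$, i.e. of $B_1,B_2$. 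Substituting the scaling, the left-hand side becomes $4n^{2/3}(u_1+u_2)+2(s_1u_1+s_2u_2-u)$, and I would expand $\varepsilon_i\sqrt{\Delta_i}$ in descending powers of $n$ using the expansion above together with $\tilde{\sigma}_{n}=n^{1/6}\Phi+n^{-1/6}\Phi_2+\cdots$. The two highest-order relations then hold automatically (they collapse to $u_1+u_2=u_1+u_2$ once the $\varepsilon_i$ are chosen consistently), and the first relation that is not identically satisfied is an equation among $u,u_i,u_{ij}$ from which all explicit powers of $n$ have dropped out; clearing the denominators $u_1,u_2$ puts it into the stated form. As a consistency check, formally setting $\partial_{s_2}\tilde{\sigma}_{n}\equiv0$ should collapse the result to the scaled single-jump equation underlying Theorem~\ref{cor} (whose own scaled form follows from (\ref{sod}) in the same way).

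The hard part is purely the asymptotic bookkeeping. Because (\ref{equ}) is the square of a sum of two square roots, its leading order is vacuous, so the expansion must be carried through two further orders; at the decisive order the contributions of $\sqrt{\Delta_1}$, of $\sqrt{\Delta_2}$, of the cross term generated by the $\partial_{t_1}\sigma_n+\partial_{t_2}\sigma_n$ correction inside $2n+\partial_{t_1}\sigma_n+\partial_{t_2}\sigma_n$, and of the $\sigma_n$ and $s_iu_i$ pieces of the left-hand side all enter at once and must be collected simultaneously. One also has to propagate the sign branches $\varepsilon_1,\varepsilon_2$ consistently through the roots (the cases $B_2>0$ and $B_2<0$ are treated identically save for $\sqrt{\Delta_2}\to-\sqrt{\Delta_2}$). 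A miscount at this order would alter the coefficients of the limiting equation; once the correct order is isolated, the identification with the claimed second order partial differential equation is immediate.
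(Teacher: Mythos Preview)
Your approach is correct and reaches the same endpoint, but it is organized differently from the paper. The paper does not go back to the pre-squared relation with $\varepsilon_1\sqrt{\Delta_1}+\varepsilon_2\sqrt{\Delta_2}$; it simply substitutes $t_i=\sqrt{2n}+2^{-1/2}n^{-1/6}s_i$, $\partial_{t_i}=\sqrt{2}\,n^{1/6}\partial_{s_i}$ straight into the polynomial identity (\ref{equ}), treats $\tilde\sigma_n$ and its $s$-derivatives as $O(1)$, and reads off the leading order. After an overall rescaling the result factors as
\[
(\partial_{s_1}\tilde\sigma_n+\partial_{s_2}\tilde\sigma_n)\Bigl[\partial_{s_1}\tilde\sigma_n(\partial_{s_2}^2\tilde\sigma_n+\partial_{s_1}\partial_{s_2}\tilde\sigma_n)^2+\partial_{s_2}\tilde\sigma_n(\partial_{s_1}^2\tilde\sigma_n+\partial_{s_1}\partial_{s_2}\tilde\sigma_n)^2-4\partial_{s_1}\tilde\sigma_n\,\partial_{s_2}\tilde\sigma_n(s_1\partial_{s_1}\tilde\sigma_n+s_2\partial_{s_2}\tilde\sigma_n-\tilde\sigma_n)\Bigr]+O(n^{-1/6})=0,
\]
and dividing by the first factor gives the claim. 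Your route via the unsquared identity amounts to choosing one of the four linear factors of (\ref{equ}) (the one that is $O(1)$ rather than $O(n^{2/3})$) and expanding it directly; this makes the cancellation mechanism more transparent and avoids producing the spurious prefactor $(\partial_{s_1}\tilde\sigma_n+\partial_{s_2}\tilde\sigma_n)$, at the cost of tracking the sign branches $\varepsilon_i$. Two small remarks: (i) with $u,u_i,u_{ij}$ treated as $O(1)$, only the single leading order $O(n^{2/3})$ is vacuous and the very next order $O(1)$ already gives the equation, so the extra ansatz $\tilde\sigma_n=n^{1/6}\Phi+n^{-1/6}\Phi_2+\cdots$ is unnecessary here and your ``two highest orders'' is one more than needed; (ii) the proposed consistency check of setting $\partial_{s_2}\tilde\sigma_n\equiv 0$ makes the limiting equation degenerate rather than reducing to the single-jump scaled equation, so it is not a useful check in that form.
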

\begin{proof}
Substituting $t_1=\sqrt{2n}+2^{-\frac{1}{2}}n^{-\frac{1}{6}}s_1$ and $t_2=\sqrt{2n}+2^{-\frac{1}{2}}n^{-\frac{1}{6}}s_2$ into (\ref{equ}), and noting that
$\tilde{\sigma}_{n}(s_1,s_2):=\sigma_{n}(\sqrt{2n}+2^{-\frac{1}{2}}n^{-\frac{1}{6}}s_1,\sqrt{2n}+2^{-\frac{1}{2}}n^{-\frac{1}{6}}s_2)$, we obtain
\bea
&&\left(\partial_{s_1}\tilde{\sigma}_{n}+\partial_{s_2}\tilde{\sigma}_{n}\right)\Big[\partial_{s_1}\tilde{\sigma}_{n}\left(\partial_{s_2}^{2}\tilde{\sigma}_{n}
+\partial_{s_2}\partial_{s_1}\tilde{\sigma}_{n}\right)^{2}+
\partial_{s_2}\tilde{\sigma}_{n}\left(\partial_{s_1}^{2}\tilde{\sigma}_{n}+\partial_{s_1}\partial_{s_2}\tilde{\sigma}_{n}\right)^{2}\nonumber\\
&-&4\partial_{s_1}\tilde{\sigma}_{n}\:\partial_{s_2}\tilde{\sigma}_{n}\left(s_1\partial_{s_1}\tilde{\sigma}_{n}
+s_2\partial_{s_2}\tilde{\sigma}_{n}-\tilde{\sigma}_{n}\right)\Big]+2\sqrt{2}n^{-\frac{1}{6}}\partial_{s_1}\tilde{\sigma}_{n}\:\partial_{s_2}\tilde{\sigma}_{n}
\left(\partial_{s_1}\tilde{\sigma}_{n}+\partial_{s_2}\tilde{\sigma}_{n}\right)^{3}+O(n^{-\frac{2}{3}})=0.\nonumber
\eea
Since the terms of $O(n^{-\frac{2}{3}})$ are very complicated, we do not write down the detailed result here. Letting $n\rightarrow\infty$ and disregarding the terms of $O(n^{-\frac{1}{6}})$ and higher order terms, we have
$$
\partial_{s_1}\tilde{\sigma}_{n}\left(\partial_{s_2}^{2}\tilde{\sigma}_{n}+\partial_{s_2}\partial_{s_1}\tilde{\sigma}_{n}\right)^{2}+
\partial_{s_2}\tilde{\sigma}_{n}\left(\partial_{s_1}^{2}\tilde{\sigma}_{n}+\partial_{s_1}\partial_{s_2}\tilde{\sigma}_{n}\right)^{2}=
4\partial_{s_1}\tilde{\sigma}_{n}\:\partial_{s_2}\tilde{\sigma}_{n}\left(s_1\partial_{s_1}\tilde{\sigma}_{n}
+s_2\partial_{s_2}\tilde{\sigma}_{n}-\tilde{\sigma}_{n}\right).
$$
\end{proof}

In the end, we mention a special case, namely $B_{1}=-B_{2}$ and $t_{1}=-t_{2}$. This case makes the weight $w(x,t_1,t_2)$ an even function, and is a generalization of \cite{Lyu} on the symmetric gap probability distribution of the Gaussian unitary ensemble.

\section{Conclusion}
We study the Hankel determinants for a Gaussian weight with one and two jumps. For the single jump case, by using the ladder operator approach, we obtain three auxiliary quantities $r_{n}(t_{1})$, $R_{n}(t_{1})$, $\sigma_{n}(t_{1})$, related to the Hankel determinant $D_{n}(t_{1})$. From (\ref{sigma}), (\ref{rnt}) and (\ref{eq4}) we know the relations are
$$
\sigma_{n}(t_{1})=\frac{d}{dt_{1}}\ln D_{n}(t_{1}),
$$
$$
r_{n}(t_{1})=\frac{1}{2}\sigma_{n}'(t_{1}),
$$
and
$$
R_{n}(t_{1})=\frac{2t_{1}\sigma_{n}'(t_{1})-2\sigma_{n}(t_{1})-\sigma_{n}''(t_{1})}{4n+2\sigma_{n}'(t_{1})}.
$$
We show that $\sigma_{n}(t_{1})$ satisfies both the continuous and discrete $\sigma$ form of the Painlev\'{e} IV. $y(t_{1})=R_{n}(-t_{1})$ satisfies a Painlev\'{e} IV and $v(t_{1})=-2r_{n}(t_{1})-\frac{2n}{3}$ satisfies a Chazy II. $R_{n}(t_{1})$ and $r_{n}(t_{1})$ also satisfy a second order non-linear difference equations respectively. Moreover, we consider the large $n$ behavior of the corresponding monic orthogonal polynomials and prove that they satisfy the biconfluent Heun equation. We also consider the large $n$ asymptotics of the auxiliary quantities under a double scaling, which gives rise to the Painlev\'{e} XXXIV equation. For the general case with two jumps, we also use the ladder operator approach to obtain a partial differential equation to describe the Hankel determinant $D_{n}(t_{1},t_{2})$, which is a two variables' generalization of the Jimbo-Miwa-Okamoto $\sigma$ form of the Painlev\'{e} IV.

\section*{Acknowledgments}
This work of Chao Min was supported by the Scientific Research Funds of Huaqiao University under grant number 600005-Z17Y0054.
Yang Chen was supported by the Macau Science and Technology Development Fund under grant numbers FDCT 130/2014/A3, FDCT 023/2017/A1 and the University of Macau through MYRG 2014-00011-FST, MYRG 2014-00004-FST.

\end{document}